\numberwithin{equation}{section}
\newcommand{\ff}{\mathbb{F}}
\newcommand{\Cc}{\mathscr{C}}
\newcommand{\cb}{\mathbf{c}}
\newcommand{\xb}{\mathbf{x}}
\newcommand{\yb}{\mathbf{y}}
\newcommand{\vb}{\mathbf{v}}
\newcommand{\wt}{\mathrm{wt}}
\newtheorem{lem}{Lemma}
\newtheorem{cor}{Corollary}
\newtheorem{prop}{Proposition}
\newtheorem{thm}{Theorem}
\theoremstyle{definition}
\newtheorem{defn}{Definition}
\newtheorem{ex}{Example}
\newtheorem{rem}{Remark}
\newcommand{\ds}{\displaystyle}
\newcommand{\M}{{\rm M}}
\newcommand{\ab}{\boldsymbol{\alpha}}
\begin{document}
\title{On the existence of MRD self-dual codes}
\author{Gr\'{e}gory Berhuy }

\address{Institut Fourier, Universit\'{e} Grenoble Alpes, France}
\email{gregory.berhuy@univ-grenoble-alpes.fr}
\date{\today}

\maketitle

\begin{abstract}
 In this paper, we investigate the existence of self-dual MRD codes $C\subset L^n$, where $L/F$ is an arbitrary field extension of degree $m\geq n$. We give necessary conditions on $L/F$  to have the existence of such codes, as well as sufficient conditions on $F$ to have their non-existence.
 We also give a full answer when $L/F$ is cyclic and $m=n \equiv 2 \ [4]$, and apply our results to finite fields. Along the way, we also prove the non-existence of self-dual MRD codes for another type of duality.
\end{abstract}

{\it  Keywords. }Self-dual code, MRD code, Rank metric codes, Symmetric bilinear forms, Duality

\textit{Mathematics Subject Classification: Primary: 94B05; Secondary: 11E04}

\tableofcontents

\section{Basic definitions and summary of the main results}

Let $F$ be an arbitrary field. Let $m,n\geq 1$ be two integers such that $m\geq n$. In \cite{Del}, Delsarte introduced the notion of rank metric (linear) code: a {\it Delsarte rank metric code} is an $F$-linear subspace $\Cc$ of the space $\M_{m\times n}(F)$ of $m\times n$ matrices with entries in $F.$

The {\it rank distance} $d_1(\Cc)$ of $\Cc$ is the integer $$d_1(\Cc)=\min_{M\in\Cc\setminus\{0\}}({\rm rk}(M)). $$

It is well-known (see \cite[Thm 5.4]{Del}, for example) that we have the {\it Singleton bound} $$d_1(\Cc)\leq n-\dfrac{\dim_F(\Cc)}{m}+1.$$

The rank metric code $\Cc$ is said to be {\it MRD} if $d_1(\Cc)= n-\dfrac{\dim_F(\Cc)}{m}+1.$  

In particular, the dimension of an MRD code is necessarily a multiple of $m.$

The {\it dual code} of $\Cc$ is the orthogonal space $\Cc^\perp$, where the orthogonal is defined with respect to the standard inner product on $\M_{m\times n}(F)$, that is, the $F$-bilinear form $$(M,N)\in\M_{m\times n}(F)\times \M_{m\times n}(F)\mapsto {\rm tr}(MN^t)\in F. $$

Later on, Gabidulin proposed in \cite{Gab} another definition of a rank metric code, as follows.
Let $n\geq 1$ be an integer, and let $L/F$ be a field extension of finite degree $m\geq n$. 

An $L$-linear subspace $C\subset L^n$ then gives rise naturally to a Delsarte rank metric code $M_{\ab}(C)\subset \M_{m\times n}(F),$ as we explain now.

Let $\ab=(\alpha_1,\ldots,\alpha_m)$ be an $F$-basis of $L$.

If $\cb=(c_1,\ldots,c_n)\in L^n$, write $c_j=\ds\sum_{i=1}^m c_{ij}\alpha_i, c_{ij}\in F$, and set $M_{\ab}({\bf c})=(c_{ij})\in\M_{m\times n}(F).$

Then $M_{\ab}(C)=\{M_{\ab}(\cb) \mid \cb\in C\} $ is an $F$-linear subspace of $\M_{m\times n}(F)$ of dimension $m\dim_L(C).$

The matrix $M_{\ab} (\cb)$ obviously depends on the choice of the $F$-basis $\ab$ of $L$. However, its rank does not, since another choice of basis multiply $M_{\ab}(\cb)$ on the left by an invertible $m\times m$ matrix with entries in $F$.

We define {\it the weight} of $\cb$, denoted by $\wt(\cb)$, to be the rank of $M_{\ab}(\cb)$, where $M_{\ab} (\cb)$ is computed with respect to any choice of an $F$-basis $\ab$ of $L.$

The {\it rank distance} of $C$ is then defined as $$d_1(C)=d_1(M_{\ab}(C))=\min_{\cb\in C\setminus\{0\}}(\wt(\cb)).$$

The code $C\subset L^n$ is said to be MRD if $M_{\ab}(C)$ is.

In this context, the Singleton bound may be rewritten as $$d_1(C)\leq n-\dim_L(C)+1, $$
so $C$ is MRD if and only if $d_1(C)=n-\dim_L(C)+1.$

An $L$-linear subspace $C$ of $L^n$ will be called a {\it Gabidulin rank metric code}. Such a code may then be viewed as a special case of a Delsarte rank metric code. By definition, the two notions of rank distances agree.

We now define the {\it dual code} of $C$ to be the orthogonal space $C^\perp,$ where the orthogonal is defined with respect to the standard inner product on $L^n,$ that is $({\bf x}, {\bf y})\in L^n \times L^n\mapsto {\bf x}{\bf y}^t\in L.$

Note that both notions of rank metric codes and dual codes have been originally defined over finite fields but readily extend to arbitrary fields.

A natural question is:

does there exist a Gabidulin or Delsarte self-dual MRD code ?

We should insist on the fact that in general, there is no obvious relation between the two previous notions of self-duality. More precisely, if $C\subset L^n$ is self-dual for the standard inner product of $L^n$, the corresponding rank metric code $M_{\ab}(C)$ is not necessarily self-dual anymore for the form $$(M,N)\in\M_{m\times n}(F)\times \M_{m\times n}(F)\mapsto {\rm tr}(MN^t). $$ However, as already pointed out by Ravagnani in \cite{Rav} in the case of finite fields, both notions of duality will agree for a suitable choice of the $F$-basis $\ab$ of $L.$ We will address this delicate question in the case of arbitrary fields in this paper, generalizing some results of Ravagnani along the way.

Note that the dimension over $F$ of a self-dual code is necessarily equal to $\dfrac{mn}{2}$, so $mn$ has to be even.

In \cite{Neb}, Nebe and Willems investigate the existence of Delsarte self-dual MRD codes in the case $n=m$ (so $n$ is necessarily even), and  prove the following results (among other things):

    \begin{enumerate}
        \item If ${\rm char}(F)=2,$ there is no self-dual MRD code $\Cc\subset\M_n(F).$

\medskip

\item Assume that $F=\ff_q$, where $q$ is odd.

\medskip

\begin{enumerate}

        \item If $n=2$, self-dual MRD codes exist if and only if $-1$ is not a square in $F.$

        \item If $n\equiv 2 \ [4]$ and $-1$ is not a square in $F,$ there exist self-dual MRD codes.
\end{enumerate}
        
    \end{enumerate}

In the same paper, the authors notice also that there is no known example of self-dual MRD codes $\Cc \subset\M_n(\ff_q)$ when $-1$ is a square in $\ff_q$ or $n\equiv 0 \ [4].$

In this paper, we will prove some existence and non-existence results of Gabidulin self-dual MRD codes $C\subset L^n$, where $F$ is an arbitrary field and $L/F$ is a finite degree field extension.

{\bf Notational convention. }In this paper, Gabidulin rank metric codes will be denoted by the letter $C$, and Delsarte rank metric codes will be denoted by the letter $\mathscr{C}.$

We now describe the structure of this paper. Unless specified otherwise, the codes we consider now are Gabidulin rank metric codes.

After proving two preliminary results on the rank distance in Section 2, we prove in Section 3 the non-existence of Gabidulin MRD self-dual codes for another type of duality (namely,  the duality with respect to the standard hyperbolic form). 

 In Section 4, we prove that there are no MRD self-dual codes if $F$ has characteristic two. In the case where $F$ has odd characteristic, we then use the results of Section 3 to translate the existence or non-existence properties of self-dual MRD codes in terms of representations of $-1$ as a sum of squares.  
We will then apply our results to the case of finite fields. In particular, if $F$ is a finite field,  we get necessary and sufficient conditions for the existence a self-dual MRD code in the case where $m=n.$

In Section 5, we will relate the two notions of duality introduced previously, generalizing the work of Ravagnini. In particular, in the case of finite fields, we will prove that if $C\subset L^n$ is self-dual, then $M_{\ab}(C)$ is also self-dual for a suitable choice of $\ab$.

We also put our results in perspective with the known existence results for Delsarte rank metric codes.

Finally, in Section 6, we go back to the characteristic two case, and we propose a suitable substitute for MRD self-dual codes.

\section{Preliminaries on the rank distance}

In this short section, we will prove some useful lemmas on the rank distance.

Let $F$ be an arbitrary field, and let $L/F$ be a field extension of degree $m$. 

Finally, let $n\geq 1$ be an integer such that $m\geq n$.

We will fix once and for all an $F$-basis $\ab=(\alpha_1,\ldots,\alpha_m)$ of $L.$

We will need the following basic result of linear algebra.

\begin{lem}\label{mc}
For any $\cb\in L^n,$ $M_{\ab}(\cb)$ is the unique matrix $M\in \M_{m\times n}(F)$ satisfying $\cb=\ab M.$
\end{lem}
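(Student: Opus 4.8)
The statement to prove is Lemma~\ref{mc}: for any $\cb \in L^n$, the matrix $M_{\ab}(\cb)$ is the unique $M \in \M_{m\times n}(F)$ with $\cb = \ab M$, where $\ab = (\alpha_1,\ldots,\alpha_m)$ is viewed as a row vector of elements of $L$, and $\ab M$ is the formal matrix product landing in $L^n$.

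\begin{proof}
Write $\cb = (c_1,\ldots,c_n)$ and expand each coordinate in the basis $\ab$: by definition $c_j = \sum_{i=1}^m c_{ij}\alpha_i$ with $c_{ij}\in F$, and $M_{\ab}(\cb) = (c_{ij}) \in \M_{m\times n}(F)$. Now compute the $j$-th entry of the row vector $\ab M_{\ab}(\cb) \in L^n$: it is $\sum_{i=1}^m \alpha_i c_{ij} = \sum_{i=1}^m c_{ij}\alpha_i = c_j$, since the $c_{ij}$ lie in $F$ and hence commute with the $\alpha_i$. Thus $\ab M_{\ab}(\cb) = \cb$, which establishes existence.

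For uniqueness, suppose $M = (m_{ij}) \in \M_{m\times n}(F)$ also satisfies $\ab M = \cb$. Then for each $j$ we have $\sum_{i=1}^m m_{ij}\alpha_i = c_j = \sum_{i=1}^m c_{ij}\alpha_i$, so $\sum_{i=1}^m (m_{ij} - c_{ij})\alpha_i = 0$ in $L$ with all coefficients $m_{ij}-c_{ij}$ in $F$. Since $\ab$ is an $F$-basis of $L$, these coefficients all vanish, i.e. $m_{ij} = c_{ij}$ for all $i,j$, so $M = M_{\ab}(\cb)$.
\end{proof}

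There is essentially no obstacle here: the lemma is a direct unwinding of the definition of $M_{\ab}$ together with the $F$-linear independence of the basis $\ab$. The only point requiring a word of care is interpreting the product $\ab M$ correctly — treating $\ab$ as a length-$m$ row vector over $L$ and $M$ as an $m \times n$ matrix over $F \subseteq L$, so that the product is a length-$n$ row vector over $L$ — after which existence is a computation and uniqueness is exactly the statement that the $\alpha_i$ are $F$-linearly independent.
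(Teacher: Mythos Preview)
Your proof is correct and follows essentially the same approach as the paper: compute the entries of $\ab M$ explicitly, observe that $\ab M_{\ab}(\cb)=\cb$ by definition of the coordinates $c_{ij}$, and deduce uniqueness from the $F$-linear independence of $\alpha_1,\ldots,\alpha_m$. The only cosmetic difference is that the paper phrases uniqueness as ``$\ab M=0\Rightarrow M=0$'' while you compare two solutions directly, which amounts to the same thing.
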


%
%

The following easy lemma will be crucial in the sequel.

\begin{lem}\label{fl}
 Let $f:L^n\overset{\sim}{\to} L^n$ be an $L$-automorphism of $L^n$. Let $P\in \M_n(L)$ be the matrix representation of $f$ with respect to the canonical basis of $L^n$, that is, the unique matrix $P$ such that $f({\bf c})={\bf c}P^t$ for all ${\bf c}\in L^n$.
 
Assume that $P\in {\rm GL}_n(F).$ Then, for all $\cb\in L^n$, we have $\wt(f(\cb))=\wt(\cb).$

 In particular, for every linear code $C\subset L^n,$ we have $d_1(f(C))=d_1(C).$
\end{lem}

\begin{proof}
If $\cb=(c_1,\ldots,c_n)\in L^n$, we have $f(\cb)=\cb P^t$, and thus $M_{\ab}(f(\cb))=\ab  M_{\ab}(\cb) P^t$ by Lemma \ref{mc}. The uniqueness part of this lemma then gives us $M_{\ab}(f(\cb))=M_{\ab}(\cb)P^t.$

Since $P$ is invertible, the lemma follows.
\end{proof}

The next lemma is certainly well-known. Since it will be very useful to establish our main results, we prove it for sake of completeness.

\begin{lem}\label{wt}
For any $\cb=(c_1,\ldots,c_n)\in L^n,$ we have 
$$\wt(\cb)=\dim_F({\rm Span}_F(c_1,\ldots,c_n)).$$   
\end{lem}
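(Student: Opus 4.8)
The plan is to recall that by definition $\wt(\cb)$ is the rank of the matrix $M_{\ab}(\cb)\in\M_{m\times n}(F)$, where the $j$-th column of $M_{\ab}(\cb)$ is the coordinate vector of $c_j$ in the $F$-basis $\ab=(\alpha_1,\dots,\alpha_m)$ of $L$. The rank of a matrix equals the dimension of its column space, so $\wt(\cb)$ is the dimension of the $F$-subspace of $F^m$ spanned by these coordinate vectors. On the other hand, $\dim_F(\mathrm{Span}_F(c_1,\dots,c_n))$ is the dimension of the $F$-subspace of $L$ spanned by $c_1,\dots,c_n$. So the statement will follow once I identify these two dimensions.

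The key observation is that the coordinate map $\varphi:L\to F^m$ sending an element of $L$ to its vector of coordinates with respect to $\ab$ is an $F$-linear isomorphism. First I would note that the $j$-th column of $M_{\ab}(\cb)$ is precisely $\varphi(c_j)$, by the very definition of $M_{\ab}(\cb)$. Then, since $\varphi$ is an isomorphism of $F$-vector spaces, it maps $\mathrm{Span}_F(c_1,\dots,c_n)$ bijectively and $F$-linearly onto $\mathrm{Span}_F(\varphi(c_1),\dots,\varphi(c_n))$, the column space of $M_{\ab}(\cb)$. Hence
$$\dim_F(\mathrm{Span}_F(c_1,\dots,c_n))=\dim_F(\mathrm{Span}_F(\varphi(c_1),\dots,\varphi(c_n)))=\mathrm{rk}(M_{\ab}(\cb))=\wt(\cb),$$
which is exactly the claim. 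Alternatively, one can phrase this without explicitly introducing $\varphi$: writing $M=M_{\ab}(\cb)$, a relation $\sum_j \lambda_j c_j=0$ with $\lambda_j\in F$ holds if and only if $\sum_j \lambda_j (\text{$j$-th column of }M)=0$, because $c_j=\sum_i m_{ij}\alpha_i$ and the $\alpha_i$ are $F$-linearly independent; thus the $c_j$ and the columns of $M$ have exactly the same $F$-linear dependence relations, so their spans have the same dimension.

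There is essentially no obstacle here: the only thing to be careful about is to work with $F$-spans throughout (not $L$-spans) and to use the $F$-linear independence of $\alpha_1,\dots,\alpha_m$ — this is the same fact invoked in the uniqueness part of Lemma \ref{mc}, and it is what makes the correspondence between dependence relations among the $c_j$ and among the columns of $M_{\ab}(\cb)$ an equivalence rather than merely an implication. No choice of basis issue arises, since $\wt(\cb)$ is already known to be independent of $\ab$.
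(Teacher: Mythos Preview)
Your proof is correct and rests on the same idea as the paper's: the $j$-th column of $M_{\ab}(\cb)$ is the coordinate vector of $c_j$ in the basis $\ab$, so the column space of $M_{\ab}(\cb)$ and $\mathrm{Span}_F(c_1,\dots,c_n)$ correspond under the coordinate isomorphism $L\simeq F^m$. The paper packages this dually, composing the map $u:\xb\in F^n\mapsto \xb M_{\ab}(\cb)^t$ with the injection $\iota:\yb\mapsto \yb\ab^t\in L$ to obtain $\xb\mapsto \xb\cb^t$ with image $\mathrm{Span}_F(c_1,\dots,c_n)$; since $\iota$ is precisely the inverse of your coordinate map $\varphi$, the two arguments are the same computation read in opposite directions.
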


\begin{proof}
Let $\cb=(c_1,\ldots,c_n)\in L^n.$ If $\ab=(\alpha_1,\ldots,\alpha_m)$ is an $F$-basis of $L$, 
the map $\ds\sum_{i=1}^m x_i\alpha_i\mapsto (x_1,\ldots,x_m)\in F^m$ is an $F$-linear isomorphism which maps ${\rm Span}_F(c_1,\ldots,c_n)$ onto the column space of $M_{\ab}(\cb),$ hence the result. 
\end{proof}

\section{Lagrangian codes}

In this section and the next ones, we will make use of the following notation.

{\bf Notation. }If $n=2d$ is an even integer and $K$ is a field, a matrix $M\in\M_{k\times n}(K)$ will be sometimes denoted by $M=(M_1\mid M_2),$ where $M_i\in\M_{k\times d}(K),$
and a vector ${\bf c}\in K^n$ will be written as ${\bf c}=({\bf v}\mid {\bf w}),$ where ${\bf v},{\bf w}\in K^d.$

Let $F$ be an arbitrary field, and let $n=2d\geq 2$ be an even integer. Set $H_n=\begin{pmatrix}
0 & I_d \cr I_d & 0
\end{pmatrix}\in\M_n(L),$ where $I_d$ is the identity matrix of size $d$.

We define the canonical hyperbolic on $L^n$  $h_{n,L}: L^n\times L^n\to L$ by $$h_{n,L}({\bf x},{\bf y})={\bf x}H_n{\bf y}^t \ \mbox{ for all }{\bf x},{\bf y}\in L^n. $$

We will denote $C^{\perp_n}$ the orthogonal of $C$ with respect to $h_{n,L}$, that is $$C^{\perp_n}=\{ \xb\in L^n\mid h_{n,L}(\xb,\cb)=0 \ \mbox{ for all } \cb\in C\}.$$

\begin{rem}\label{remcn}
If $C$ is defined as the row space of a matrix $G=(G_1\mid G_2),$ where $G_i\in\M_{k\times d}(L)$ for some integer $k\geq 1$, then easy block matrix computations show that $C^{\perp_n}=\ker(G_2\mid G_1)^t$.

In particular, we will have $C\subset C^{\perp_n}$ if and only if $G_2G_1^t+G_1G_2^t=0$, that is, if and only if $G_1G_2^t$ is a skew-symmetric matrix.
\end{rem}

In this section, we will study linear codes which are self-dual with respect to $h_{n,L}$.
Note that, since $h_{n,L}$ is non-degenerate, we have $$\dim_L(C^{\perp_n})=n-\dim_L(C).$$ In particular, if $C=C^{\perp_n}$, we get $\dim_L(C)=d.$

Since self-dual spaces are called Lagrangian spaces in quadratic form theory, we introduce the following terminology.

\begin{defn}
Let $n=2d$ be an even integer.

 A  {\it Lagrangian code} of $L^n$ is a subspace $C$ of $L^n$ such that $C=C^{\perp_n}.$ Equivalently, a code $C\subset L^n$ is a Lagrangian code if $\dim_L(C)=d$ and $C\subset C^{\perp_n}$.

In particular, if $C$ is  the row space of a full rank matrix $G=(G_1\mid G_2),$ where $G_i\in\M_{k\times d}(L)$ for some integer $k\geq 1$, then $C$ is a Lagrangian code if and only if $k=d$ and $G_1G_2^t$ is a skew-symmetric matrix.
\end{defn}

We now come to the main result of this section.

\begin{thm}\label{MRDhyp}
Let $n=2d$ be an even integer, and let $L/F$ be a field extension of degree $m\geq n,$ where $F$ has odd characteristic. For any Lagrangian code $C$, there exists ${\bf c}\in C\setminus\{0\}$ such that ${\rm wt}(\cb)\leq d$.

In particular, any Lagrangian code satisfies $d_1(C)\leq d.$ In other words, there are no MRD Lagrangian codes.
\end{thm}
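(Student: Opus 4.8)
The plan is to show that any Lagrangian code $C\subset L^n$ (so $\dim_L(C)=d$) must contain a nonzero vector $\cb$ whose coordinates span an $F$-subspace of $L$ of dimension at most $d$; by Lemma~\ref{wt} this gives $\wt(\cb)\leq d$, hence $d_1(C)\leq d<n-d+1$, so $C$ is not MRD. The key observation is that self-duality with respect to $h_{n,L}$ forces strong structural constraints: $C$ is a maximal totally isotropic subspace of the hyperbolic space $(L^n,h_{n,L})$, and I want to exploit the fact that $h_{n,L}$ is defined over $F$ (its Gram matrix has entries in $F$) even though $C$ itself need only be defined over $L$.

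First I would fix an $F$-basis $\ab=(\alpha_1,\ldots,\alpha_m)$ of $L$ and translate the problem to matrices. Using Lemma~\ref{mc}, a vector $\cb\in L^n$ corresponds to $M_{\ab}(\cb)\in\M_{m\times n}(F)$, and $\wt(\cb)=\mathrm{rk}(M_{\ab}(\cb))$. The condition that $\cb$ lies in $C$ is an $L$-linear condition, but the isotropy condition $h_{n,L}(\cb,\cb)=0$ unpacks, after writing $\cb=\ab M$, into a condition on the $F$-matrix $M$: namely $M^t J M=0$ where $J$ is the (symmetric, $F$-rational) Gram matrix of $h_{n,L}$ — here I am using that $h_{n,L}(\ab M,\ab N)=\sum_{i,j}\alpha_i\alpha_j (M^tJN)_{ij}$, and I would extract from the Lagrangian condition that each symmetric "coordinate" of $M^tJM$ in the basis $\{\alpha_i\alpha_j\}$ vanishes — in particular the trivial quadratic form identity gives that the columns of any such $M$ span a totally isotropic $F$-subspace, or more precisely that $\mathrm{rk}(M)\le$ half the ambient dimension when the form is anisotropic-free. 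The cleanest route: pick any nonzero $\cb\in C$ and let $W=\mathrm{Span}_F(c_1,\ldots,c_n)\subseteq L$; I claim $W$ is totally isotropic for a naturally induced bilinear form and hence $\dim_F W\le$ something controllable.

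The main obstacle — and where the real content lies — is that $W$ need not itself be totally isotropic for $h_{n,L}$ in any direct sense, so I would instead argue as follows. Consider the $L$-subspace $V=C+jC$... no: rather, observe that $M_{\ab}(C)=\{M_{\ab}(\cb):\cb\in C\}$ is an $F$-subspace of $\M_{m\times n}(F)$ of dimension $md=mn/2$, and I want a single $M$ in it of rank $\le d$. Equivalently, viewing the columns, I want a nonzero $\cb\in C$ with few $F$-independent coordinates. Here is the mechanism: since $C$ is totally isotropic for $h_{n,L}$ and $h_{n,L}$ is $F$-bilinear, for any $F$-linear functional $\lambda:L\to F$ the composite $h_\lambda:=\lambda\circ h_{n,L}$ is an $F$-bilinear form on $L^n$ (an $mn$-dimensional $F$-space) for which $C$, viewed as an $F$-subspace of dimension $md$, is totally isotropic. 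Now $h_\lambda$ is a nondegenerate (for suitable $\lambda$, or at least non-too-degenerate) symmetric $F$-bilinear form on $L^n$; a totally isotropic subspace of an $F$-space of dimension $mn$ carrying a form of Witt index... has dimension at most $mn/2$, so $md\le mn/2$, which is automatic and gives nothing. The point I actually need is a \emph{rank} statement, not a dimension statement, so I would instead:

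\medskip

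Pick $\cb\in C\setminus\{0\}$ of minimal weight $w=\wt(\cb)=d_1(C)$ and suppose for contradiction $w\ge d+1$. By Lemma~\ref{wt}, $\dim_F\mathrm{Span}_F(c_1,\ldots,c_n)=w\ge d+1$. After applying an automorphism $f$ of $L^n$ with matrix in $\mathrm{GL}_n(F)$ that \emph{preserves $h_{n,L}$} (i.e. an element of $O(h_{n,L})(F)$, the split orthogonal group over $F$) — which by Lemma~\ref{fl} changes neither weights nor the Lagrangian property — I may normalize $\cb$. Since $h_{n,L}(\cb,\cb)=0$ and $\cb\neq 0$, $\cb$ lies on the $F$-split quadric; using that the split orthogonal group over $F$ acts transitively on nonzero isotropic vectors, I can move $\cb$ to $e_1=(1,0,\ldots,0)$. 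But then $\wt(\cb)=\wt(e_1)=\dim_F\mathrm{Span}_F(1)=1\le d$, contradiction — \emph{provided} $d\ge 1$, which holds. Wait: this would prove $d_1(C)=1$ always, which is false; the flaw is that $O(h_{n,L})(F)$ does \emph{not} act transitively on the $L$-points of the isotropic cone via its $F$-matrix action in a way that respects weight that crudely — a generic isotropic $\cb\in L^n$ has weight $>1$. So the honest argument must be subtler: I would instead reduce $\cb$ using $O(h)(F)$ to a normal form where at most $d$ coordinates are nonzero (e.g. supported on the Lagrangian $\mathrm{Span}(e_1,\ldots,e_d)$), since over $F$ any isotropic \emph{subspace} — and $L\cdot\cb$ together with enough of $C$ forms one — can be moved inside the standard Lagrangian; then those $\le d$ coordinates span an $F$-space of dimension $\le d$. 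Concretely: $C$ is a Lagrangian $L$-subspace, but the key is that I can choose an $F$-rational Lagrangian $\Lambda_0=\mathrm{Span}_L(e_1,\ldots,e_d)$ and an element $g\in O(h)(F)$... and here the obstruction appears — $C$ and $\Lambda_0$ need not be in the same $O(h)(F)$-orbit because $C$ is not $F$-rational. This is the genuine difficulty, and I expect the author resolves it by a clever direct argument (perhaps: intersect $C$ with the $F$-rational Lagrangian $\Lambda_0$ after complexification/scalar extension, or count dimensions of $C\cap(\Lambda_0\otimes_F L)$ to produce a low-weight vector, using that $\dim_L C+\dim_L\Lambda_0=n$ so $\dim_L(C\cap\Lambda_0\otimes L)\ge$ ... is typically $0$, so instead one should look at $C\cap(V\otimes_F L)$ for a large $F$-subspace $V\subset L^n$). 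The clean fix, which I would pursue: take $V=\mathrm{Span}_F(e_1,\ldots,e_n)\subset F^n$... that is all of $F^n$. Rather, extend scalars: $C_{\ov L}=C\otimes_L\ov L$ versus $M_{\ab}$; the rank $\le d$ vector should come from $\dim_F M_{\ab}(C)=md$ being large enough that some element has a nontrivial kernel of dimension $\ge m-?$...

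\medskip

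Given the space constraints, the cleanest plan I would commit to is: (1) Show $M_{\ab}(C)\subseteq\M_{m\times n}(F)$ is an $F$-subspace of dimension $md=mn/2$ satisfying $M^tJN+N^tJM=0$ for all $M,N\in M_{\ab}(C)$, i.e. $M_{\ab}(C)$ is a totally isotropic subspace for the $F$-bilinear form $B(M,N)=\mathrm{tr}(M^tJN)$-type pairing on $\M_{m\times n}(F)$ coming from $h$. (2) Compute that this ambient $F$-bilinear form on $\M_{m\times n}(F)\cong F^{mn}$ is itself split (hyperbolic) of dimension $mn$ — because $J$ is split over $F$. (3) Now $M_{\ab}(C)$ is a \emph{maximal} totally isotropic $F$-subspace of this split space; by the structure theory of such (all Lagrangians of a split $F$-space are conjugate under the $F$-orthogonal group), $M_{\ab}(C)$ is $O(F)$-conjugate to the "standard" Lagrangian, which I would arrange to be the space of matrices supported on the first $d$ rows-or-columns block — an explicit Lagrangian visibly containing a rank-$1$ (hence rank $\le d$) matrix. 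Conjugation here is realized by left/right multiplication by $F$-matrices, i.e. exactly the weight-preserving operations of Lemma~\ref{fl} plus basis change $\ab$, so it preserves the rank profile of $M_{\ab}(C)$. (4) Conclude $d_1(C)\le d$. The main obstacle is step~(3): verifying that the conjugation moving $M_{\ab}(C)$ to standard form is realized by operations that preserve weights — this is where Lemma~\ref{fl} (right multiplication by $\mathrm{GL}_n(F)$) and invariance of rank under left multiplication by $\mathrm{GL}_m(F)$ (change of $F$-basis $\ab$) must be combined, and one must check the split-orthogonal-group element decomposes into such moves; if it does not decompose that way, one instead extracts a low-rank element of the standard Lagrangian and pulls it back. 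Since $n-d+1=d+1>d$ whenever $d\geq1$, $d_1(C)\le d$ indeed precludes MRD, and since $\dim_L C=d$ is forced by $C=C^{\perp_n}$ and nondegeneracy of $h_{n,L}$, the theorem follows.
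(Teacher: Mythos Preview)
Your final plan has a genuine gap at two places, and neither is easily repaired.

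\textbf{Step (1) is false as stated.} From $h_{n,L}(\cb,\cb')=0$ for $\cb=\ab M$, $\cb'=\ab N$ you only get the scalar relation $\sum_{i,j}(MJN^t)_{ij}\,\alpha_i\alpha_j=0$ in $L$; the products $\alpha_i\alpha_j$ are far from $F$-linearly independent (there are $m^2$ of them in an $m$-dimensional space), so the matrix identity $M^tJN+N^tJM=0$ does \emph{not} follow. At best, if you first choose $\ab$ orthonormal for some $b_s$ (which need not exist over an arbitrary $F$), you obtain ${\rm tr}(MJN^t)=0$, i.e.\ $M_{\ab}(C)$ is totally isotropic for $\psi_{H_n}$---a much weaker statement than what you claim.

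\textbf{Step (3) fails even granting (1)--(2).} The orthogonal group of $\psi_{H_n}$ acts on $\M_{m\times n}(F)$ as an abstract $F$-vector space of dimension $mn$; a generic element of this group is \emph{not} realized by left and right multiplication by elements of $\mathrm{GL}_m(F)$ and $\mathrm{GL}_n(F)$, and hence does not preserve matrix rank. So moving $M_{\ab}(C)$ to a ``standard'' Lagrangian tells you nothing about the ranks of its elements; your fallback (``pull back a low-rank element'') fails for the same reason.

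\textbf{What the paper does instead.} The paper works over $L$, not $F$. By Witt's theorem applied to $(L^n,h_{n,L})$, there is an isometry $u\in O(h_{n,L})(L)$ with $u(C_0)=C$, where $C_0=\{(x_1,\ldots,x_d,0,\ldots,0)\}$. Writing the matrix of $u$ in $2\times 2$ block form $\begin{pmatrix}M&N\\ P&Q\end{pmatrix}$, the isometry condition forces $P^tM=-M^tP$; one reads off $C=\{(\vb M^t,\vb P^t):\vb\in L^d\}$. If $M$ is singular you get a nonzero $(0,\wb P^t)\in C$ with at most $d$ nonzero coordinates. If $M$ is invertible, $S:=PM^{-1}$ is skew-symmetric, so in odd characteristic its diagonal vanishes; hence $C=\{(\vb,-\vb S)\}$ contains $(\varepsilon_i,-\varepsilon_i S)$, which has at most $1+(d-1)=d$ nonzero coordinates. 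Lemma~\ref{wt} then gives $\wt\le d$. Note that Lemma~\ref{fl} is \emph{not} used here (the isometry has entries in $L$, not $F$); the whole point is to produce a codeword with small \emph{support}, which bounds the weight via Lemma~\ref{wt} regardless of the field of definition of $u$. This is exactly the mechanism your earlier attempts were circling but did not find.
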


\begin{proof}
Let $C_0=\{ (x_1,\ldots,x_d,0,\ldots,0) \mid x_1,\ldots,x_d\in L\}$. Note that, since any element of $C_0$ has at most $d$ non-zero coordinates,  
we have ${\rm wt}(\cb_0)\leq d$ for all $\cb_0\in C_0$ by Lemma \ref{wt}.

Now let $C$ be an arbitrary Lagrangian code. Assume by way of contradiction that, for all ${\bf c}\in C\setminus\{0\}$, we have ${\rm wt}(\cb)\geq d+1$. The previous observation then shows that $C\cap C_0=\{0\}$. This implies that the $L$-linear map $$\cb=(c_1,\ldots,c_n)\in C\mapsto (c_{d+1},\ldots,c_n)\in L^d$$ is injective.
Since $C$ is a Lagrangian code, we have $\dim_L(C)=d=\dim_L(L^d)$, and the map above is therefore an isomorphism. Taking the preimages of the elements of the canonical basis of $L^d$ then yields the existence of a matrix $S\in\M_d(L)$ such that $C$ is the row space of $G=(S\mid I_d)$.
Since $C\subset C^{\perp_n}$, $S$ is a skew-symmetric matrix by Remark \ref{remcn}.

Since $F$ has odd characteristic (and thus, so has $L$), it follows that the diagonal entries of $S$ are zero. Hence, any row of $G$ is a non-zero element of $C\setminus\{0\}$ whose weight is at most $d$, leading to the desired contradiction.
This concludes the proof.
\end{proof}

Note that the very last argument fails in characteristic two, since a skew-symmetric matrix may have non-zero diagonal coefficients in this case. 
In fact, we will see in an upcoming section of this paper that Lagrangian MRD codes may exist in characteristic two.

\section{Self-dual MRD codes}

We now come back to the question of the existence of MRD self-dual codes (with respect to the canonical inner product of $L^n$).

We start with the easy case of fields of characteristic two (compare with \cite[Thm 3.1]{Neb}).

\begin{prop}\label{car2}
 Let $F$ be a field of characteristic two. Let $L/F$ be a field extension of degree $m,$ and let $C\subset L^n$ be a linear code.

 If $C\subset C^\perp$, then $d_1(C^\perp)\leq 1.$

 In particular, there are no MRD self-dual codes in $L^n$.
\end{prop}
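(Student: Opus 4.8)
The plan is to exploit the degenerate behaviour of the standard inner product in characteristic two. The key structural fact I would use is that over a field of characteristic two, every vector $\xb \in L^n$ satisfies $\xb\xb^t = \sum_i x_i^2 = \left(\sum_i x_i\right)^2$, so the "all-ones" direction interacts with the form in a controlled way. More precisely, for the standard inner product $(\xb,\yb)\mapsto \xb\yb^t$, one sees that $\xb$ is isotropic precisely when $\sum_i x_i = 0$, and every codimension-one hyperplane contains low-weight vectors. So the strategy is: from $C \subset C^\perp$, produce an explicit nonzero vector in $C^\perp$ of weight $1$, namely a scalar multiple of some standard basis vector $\E_j$ (or rather its image in $L^n$), and conclude $d_1(C^\perp) \leq 1$, which by Lemma~\ref{wt} just says $C^\perp$ contains a nonzero vector spanning a $1$-dimensional $F$-subspace.

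First I would dispose of the trivial cases: if $C = 0$ then $C^\perp = L^n$, which obviously contains weight-one vectors (the basis vectors), so $d_1(C^\perp) = 1 \leq 1$. So assume $C \neq 0$ and pick a nonzero $\cb = (c_1,\dots,c_n) \in C$. The main step is to show that $C^\perp$ contains one of the vectors $e_j = (0,\dots,0,1,0,\dots,0)$ (the $j$-th standard basis vector of $L^n$), which has weight $1$. Now $\cb \in C \subset C^\perp$, and since $C^\perp$ is an $L$-subspace while $\cb \neq 0$, there is at least one index $j$ with $c_j \neq 0$; replacing $\cb$ by $c_j^{-1}\cb$ we may assume $c_j = 1$ for that particular $j$. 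The crucial observation is that $\cb e_j^t = c_j = 1 \neq 0$ would contradict... no — instead I would argue the other direction: since $C \subseteq C^\perp$ we have $C \subseteq C^\perp$, and taking a second orthogonal, $C^{\perp\perp} = C \subseteq C^\perp$ as well, so $C^\perp$ is a linear code with $C^\perp \supseteq C$ and $\dim_L C^\perp = n - \dim_L C \geq n/2$. In characteristic two, the form restricted to the line $L\cdot e_j$ is degenerate iff $e_j e_j^t = 1 \neq 0$, so that does not immediately help.

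Let me recast the main step more carefully. The real content should be: in characteristic two, any self-orthogonal code $C$ (meaning $C \subseteq C^\perp$) is contained in the hyperplane $H = \{\xb : x_1 + \dots + x_n = 0\}$, because for $\cb \in C$ we have $0 = \cb\cb^t = (\sum_i c_i)^2$, hence $\sum_i c_i = 0$. Therefore $H^\perp \subseteq C^\perp$. But $H$ is the kernel of the functional $\xb \mapsto \xb \mathbf{1}^t$ where $\mathbf{1} = (1,\dots,1)$, so $H^\perp = L\cdot \mathbf{1}$, the line spanned by the all-ones vector $\mathbf{1}$. Thus $\mathbf{1} \in C^\perp$. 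Wait — but $\mathbf{1}$ has weight... $\dim_F \mathrm{Span}_F(1,\dots,1) = \dim_F(F) = 1$. So $\mathbf{1} \in C^\perp$ is a nonzero vector of weight $1$, giving $d_1(C^\perp) \leq 1$, exactly as desired. The MRD consequence: if $C$ were self-dual then $\dim_L C = n/2$, so the Singleton bound would force $d_1(C) = n/2 + 1 \geq 2$ (as $n \geq 1$ and $n$ even means $n \geq 2$), contradicting $d_1(C) = d_1(C^\perp) \leq 1$.

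I expect the only subtlety — and it is minor — to be the degenerate edge cases: handling $C = 0$ separately (done above) and noting that the argument $\sum_i c_i = 0$ genuinely uses $\mathrm{char}(L) = 2$ (inherited from $\mathrm{char}(F) = 2$) so that squaring is additive. Everything else is linear algebra over $L$ plus the weight computation from Lemma~\ref{wt}. The step I'd be most careful writing up is the passage "$C \subseteq H \implies H^\perp \subseteq C^\perp$ and $H^\perp = L\cdot\mathbf 1$", making sure the orthogonals are all taken with respect to the same nondegenerate form $\xb\yb^t$ on $L^n$.
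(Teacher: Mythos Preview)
Your proposal is correct and, after the initial detours, lands on exactly the paper's argument: from $C\subset C^\perp$ one gets $0=\cb\cb^t=(\sum_i c_i)^2$ for every $\cb\in C$, hence $\mathbf{1}=(1,\dots,1)\in C^\perp$, and $\wt(\mathbf{1})=1$ by Lemma~\ref{wt}. The only cosmetic difference is that you phrase it via the hyperplane $H$ and its orthogonal $H^\perp=L\cdot\mathbf{1}$, whereas the paper writes $\cb\,\mathbf{1}^t=0$ directly; the content is identical.
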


\begin{proof}
Assume that $C\subset C^\perp.$ Then, for all $\cb=(c_1,\ldots,c_n)\in C$, we have $\cb \cb^t=0=c_1^2+\cdots+c_n^2=(c_1+\cdots+ c_n)^2.$

Hence $\cb \vb^t=0$ for all $\cb\in C,$ where $\vb=(1,\ldots,1)$. In other words, $\vb\in C^\perp.$ By Lemma \ref{wt}, ${\rm wt}(\vb)=1$, and the result follows.
\end{proof}

Throughout the rest of this section, we assume that $F$ has odd characteristic, unless specified otherwise.

We first derive a necessary condition to have a self-dual code.

Note that the equality $C=C^\perp$ implies as usual that $n=2\dim_L(C)$, so $n$ is necessarily even. A self-dual code will then have dimension $\dfrac{n}{2}$, and will be MRD if and only if $d_1(C)=\dfrac{n}{2}+1.$

We will then assume once and for all that $n=2d,$ where $d\geq 1.$

\begin{lem}\label{selfdualexist}
Let $L/F$ be a field extension of degree $m$, and let $n$ be an even integer.  Assume that there exists a self-dual code $C\subset L^n.$
Then the standard inner product of $L^n$ is isomorphic to the hyperbolic form $h_{n,L}$ over $L.$ 
\end{lem}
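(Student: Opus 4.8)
The statement to prove is Lemma~\ref{selfdualexist}: if a self-dual code $C \subset L^n$ exists (with respect to the standard inner product), then the standard inner product on $L^n$ is isometric over $L$ to the hyperbolic form $h_{n,L}$. The key observation is that $C$ is a \emph{Lagrangian} (totally isotropic of maximal dimension $n/2 = d$) subspace for the standard inner product. So the question becomes: a non-degenerate symmetric bilinear form of dimension $n = 2d$ over $L$ which admits a totally isotropic subspace of dimension $d$ must be isometric to the hyperbolic form of the same rank.

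**Main steps.** First I would record that $C = C^\perp$ forces $\dim_L C = d$ and, since $C \subseteq C^\perp$, that $C$ is totally isotropic: $\mathbf{x}\mathbf{y}^t = 0$ for all $\mathbf{x},\mathbf{y}\in C$. Next, I would invoke the standard structure theory of symmetric bilinear forms over a field of characteristic $\neq 2$ (e.g.\ \cite[Chapter~1]{Sch}): a non-degenerate form $b$ on a $2d$-dimensional space containing a totally isotropic subspace $W$ of dimension $d$ splits as an orthogonal sum of $d$ hyperbolic planes. Concretely, pick a basis $e_1,\dots,e_d$ of $W=C$; since $b$ is non-degenerate, the map $L^n \to W^*$ given by $v \mapsto b(v,-)|_W$ is surjective, so one can choose $f_1,\dots,f_d$ with $b(e_i,f_j) = \delta_{ij}$; then a standard Gram--Schmidt-type correction replaces each $f_j$ by $f_j - \tfrac12 b(f_j,f_j) e_j$ (using that $W$ is totally isotropic and $\mathrm{char}\,L \neq 2$) to arrange $b(f_i,f_j) = 0$ as well, so that $(e_1,\dots,e_d,f_1,\dots,f_d)$ is a symplectic-type basis in which the Gram matrix is exactly $H_n = \begin{pmatrix} 0 & I_d \\ I_d & 0 \end{pmatrix}$, which is the Gram matrix of $h_{n,L}$. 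Alternatively, one can simply cite the fact (Witt's theorem / hyperbolic splitting) that over a field of characteristic $\neq 2$, a totally isotropic subspace of maximal dimension forces the ambient form to be hyperbolic. Either route gives the claimed isometry.

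**Main obstacle.** There is no deep obstacle here; the content is entirely classical quadratic-form theory, and the only point requiring care is the characteristic assumption, which is used in the $\tfrac12$-correction step to kill the quadratic terms $b(f_j,f_j)$ — this is exactly why the paper assumes $F$ (hence $L$) has odd characteristic throughout this section. I would make sure to phrase the argument so it is clear the hyperbolic splitting is happening over $L$ (not $F$), since the code is $L$-linear and the inner product is $L$-bilinear. One could also shorten the whole argument by remarking that Theorem~\ref{MRDhyp} and its setup already exhibit $h_{n,L}$ as the canonical hyperbolic form, and that the isometry class of a non-degenerate symmetric form with a Lagrangian is uniquely determined; but writing out the symplectic basis construction is cleaner and self-contained.
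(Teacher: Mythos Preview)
Your proposal is correct and follows essentially the same approach as the paper: both observe that $C=C^\perp$ means $C$ is a totally isotropic subspace of dimension $d$ for the standard inner product, and then invoke the classical fact from quadratic form theory (the paper cites \cite[Chapter~1, Theorem~4.5~(iv)]{Sch}) that a non-degenerate form admitting such a Lagrangian is hyperbolic. One small remark on your explicit construction: the correction $f_j\mapsto f_j-\tfrac12 b(f_j,f_j)e_j$ kills the diagonal terms $b(f_j,f_j)$ but not the off-diagonal ones $b(f_i,f_j)$ for $i\neq j$; you would need the full correction $f_j\mapsto f_j-\tfrac12\sum_i b(f_i,f_j)e_i$, though your citation alternative already covers this.
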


\begin{proof}
    Assume that $C\subset L^n$ is a self-dual code. In other words, $C$ is a totally isotropic subspace of dimension $d$ with respect to the standard inner product of $L^n$. By \cite[Chapter 1,Theorem 4.5 (iv)]{Sch}, this implies that this form is isomorphic to $h_{n,L}$ over $L$.
\end{proof}

The necessary condition above is not very explicit and does not seem very easy to check in practice. We will present later a more evocative way to state it.

For the moment, note that the conlusion of Lemma \ref{selfdualexist} obviously holds if it is already holds over $F$, that is, if the standard inner product of $F^n$ is isomorphic over $F$ to the standard hyperbolic form $h_{n,F}$.

Unfortunately, in this case, we have the following negative result.

\begin{thm}\label{nonexist}
Let $L/F$ be a field extension of degree $m$, and let $n=2d$ be an even integer, with $m\geq n$.  Assume that the standard inner product of $F^n$ is isomorphic over $F$ to the standard hyperbolic form $h_{n,F}.$

Then, there are no MRD self-dual codes in $L^n.$ 
\end{thm}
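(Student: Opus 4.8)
The plan is to reduce to the already-established non-existence of MRD Lagrangian codes (Theorem \ref{MRDhyp}). The key observation is that, under the hypothesis that the standard inner product on $F^n$ is isometric over $F$ to $h_{n,F}$, there is a change of coordinates that identifies the two bilinear forms \emph{over $F$}, hence also over $L$, and that this change of coordinates is realized by a matrix with entries in $F$. More precisely, let $b_{n,L}\colon (\xb,\yb)\mapsto \xb\yb^t$ denote the standard inner product on $L^n$, with Gram matrix $I_n$. By hypothesis there exists $P\in\mathrm{GL}_n(F)$ such that $P^t I_n P = H_n$, i.e. $P^t P = H_n$ (here $H_n$ is the Gram matrix of $h_{n,F}$, which has entries in $F$, viewed inside $\M_n(L)$).

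First I would let $f\colon L^n\to L^n$ be the automorphism whose matrix with respect to the canonical basis is $P^t$ (so $f(\xb)=\xb P$). Since $P\in\mathrm{GL}_n(F)$, Lemma \ref{fl} applies and tells us that $\wt(f(\cb))=\wt(\cb)$ for every $\cb\in L^n$, and hence $d_1(f(C))=d_1(C)$ for every linear code $C\subset L^n$. Next I would check the duality compatibility: for $\xb,\yb\in L^n$ one computes $h_{n,L}(f(\xb),f(\yb)) = (\xb P)H_n(\yb P)^t = \xb (P H_n P^t)\yb^t$. So I need $PH_nP^t = I_n$; from $P^tP=H_n$ one gets $H_n^{-1}=P^{-1}P^{-t}$, and since $H_n^{-1}=H_n$ (as $H_n^2=I_n$), this gives $P H_n P^t = I_n$ after rearranging — I would carry out this short manipulation carefully, possibly instead choosing $f$ to be given by $P^{-1}$ or $P^{-t}$ so that the identity comes out cleanly. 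The upshot is that $f$ carries the form $b_{n,L}$ to $h_{n,L}$, so $C$ is self-dual for $b_{n,L}$ if and only if $f(C)$ is self-dual for $h_{n,L}$, i.e. is a Lagrangian code.

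Combining these two facts finishes the argument: if $C\subset L^n$ were a self-dual MRD code, then $f(C)$ would be a Lagrangian code with $d_1(f(C))=d_1(C)=d+1$, contradicting Theorem \ref{MRDhyp}, which says every Lagrangian code has $d_1\leq d$. For the final sentence, when $m$ is odd the standard inner product of $F^n$ (Gram matrix $I_n$) is automatically isometric to $h_{n,F}$: this is a standard fact about symmetric bilinear forms — one can see it because $h_{n,F}$ has discriminant $(-1)^d$ and represents everything, or more elementarily because over any field the form $\langle 1,\dots,1\rangle$ in an even number of variables becomes hyperbolic once $-1$ is a sum of squares, and... actually the cleanest route is to invoke the result proved later in the paper (Theorem 3(2)): if $-1$ is a sum of $2^s$ squares in $F$ there are no MRD self-dual codes, and when $m$ is odd one has $m\mid$ something forcing the relevant Stufe condition. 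The main obstacle I anticipate is precisely this last point — pinning down exactly why $m$ odd forces the hypothesis — so in the self-contained version I would instead argue directly: $h_{n,F}\cong \langle 1,-1\rangle^{\perp d}$, its determinant is $(-1)^d$, and two quadratic forms of the same even dimension over $F$ are isometric once they have the same determinant and one is hyperbolic, provided... — rather than rely on this, the safe move is to simply state that the condition "$b_{n,F}\cong h_{n,F}$" holds whenever $-1$ is a sum of squares whose number divides into the structure of $n$, and defer the $m$-odd consequence to the sharper Theorem that follows, citing \cite{Sch} for the quadratic-form input.
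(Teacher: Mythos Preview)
Your argument for the main assertion is essentially the paper's: pick $P\in\mathrm{GL}_n(F)$ realising the isometry, use it to transport a self-dual $C$ to a Lagrangian code, invoke Theorem~\ref{MRDhyp}, and conclude via Lemma~\ref{fl} that $d_1(C)\leq d$. Your hesitation about whether to use $P$, $P^{-1}$ or $P^{-t}$ is cosmetic; the manipulation $P^tP=H_n\Rightarrow PH_nP^t=I_n$ is valid since $H_n^2=I_n$.

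The genuine gap is in the ``$m$ odd'' clause. You have misread what must be shown: the claim is \emph{not} that $m$ odd forces $\langle 1,\ldots,1\rangle\cong h_{n,F}$ over $F$ (this isometry depends only on $F$ and $n$, not on $m$, and is certainly false in general). The claim is that when $m$ is odd there are no MRD self-dual codes, full stop. The paper argues by contradiction: if a self-dual code exists, Lemma~\ref{selfdualexist} says the unit form and $h_n$ become isometric over $L$; since $[L:F]=m$ is odd, Springer's theorem (\cite[Chapter~2, Corollary~5.4]{Sch}) forces them to be isometric already over $F$; now the main part of the theorem applies. Your proposal never identifies Springer's theorem, and the fallback you suggest --- invoking the later Theorem~\ref{nonexistv2} --- is circular, since that result is nothing but the present theorem rephrased via Lemma~\ref{level} (its $m$-odd clause \emph{is} this $m$-odd clause). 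The discriminant remark you sketch is also insufficient: equal rank and discriminant do not imply isometry over a general field.
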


\begin{proof}
Let $f:F^n\overset{\sim}{\to}F^n$ be an isomorphism between the standard inner product of $F^n$ and $h_{n,F}$. In other words, we have $\xb\yb^t=h_{n,F}(f(\xb),f(\yb))$ for all $\xb,\yb\in F^n$.

Let us denote by $f_L:L^n\to L^n$ the canonical extension of $f$ to $L^n$, that is, the unique endomorphism $f_L$ whose matrix representation with respect to the canonical basis of $L^n$ is exactly the matrix representation of $f$ with respect to the canonical basis of $F^n.$
Then $f_L$ is an isomorphism over $L$ between the standard inner product of $L^n$ and $h_{n,L}.$
Hence, if $C\subset L^n$ is self-dual, then $f_L(C)$ is self-dual with respect to $h_{n,L}$. By Theorem \ref{MRDhyp}, $d_1(f_L(C))\leq d$. 
But Lemma \ref{fl} implies that $d_1(C)=d_1(f_L(C))\leq d$, and $C$ is not MRD.
\end{proof}

\begin{cor}\label{corononexist}
Let $L/F$ be a field extension of degree $m$, and let $n=2d$ be an even integer, with $m\geq n$.  If $m$ is odd, there are no MRD self-dual codes in $L^n.$
\end{cor}

\begin{proof}
Assume to the contrary that there is a self-dual MRD code in $L^n$. By Lemma \ref{selfdualexist}, the standard inner product of $L^n$ is isomorphic to $h_{n,L}$. But these two forms are nothing but the extensions to $L$ of the standard inner product of $F^n$ and $h_{n,F}$ respectively.  By Springer's theorem, these latter forms are then isomorphic over $F$ (see \cite[Chapter 2, Corollary 5.4]{Sch}, for example). The previous theorem then yields a contradiction.
\end{proof}

We would like now to rephrase in a more evocative way the assumptions of the previous results. 

Let us recall first some definitions and notation from quadratic form theory. Let $K$ be a field with odd characteristic.

A symmetric bilinear space $(V,b)$ over $K$ is said to be {\it hyperbolic} if it is isomorphic to $(K^n,h_{n,K})$ for some even integer $n\geq 2$.

We say that $(V,b)$ {\it represents} $\lambda\in K^\times$ if there exists $v\in V$ such that $b(v,v)=\lambda$.

We say that $(V,b)$ is {\it multiplicative} if it is either  hyperbolic, or if it satisfies the two following conditions:

(a) $(V,b)$ is anisotropic, that is, for all $v\in V\setminus\{0\}, b(v,v)\neq 0$ 

(b) for all $\lambda\in K^\times,$ $b$ represents $\lambda$ if and only if $b\simeq \lambda b$.

Finally, recall that two non-degenerate symmetric bilinear spaces $(V_1,b_1),(V_2,b_2)$ are {\it Witt equivalent} if $(V_1,b_1)\perp(V_2,-b_2)$ is hyperbolic.

One may show that the set $W(K)$ of Witt equivalence classes of non-degenerate symmetric bilinear spaces is an abelian group for the composition law induced by the orthogonal sum, called the {\it Witt group} of $K$.

We may now state  the next result.

\begin{lem}\label{level}
Let $n$ be an even integer, and let $K$ be a field of odd characteristic. Write $n=2^{s+1}r$, where $s\geq 0$ and $r$ is odd. Then, the following properties are equivalent:

\begin{enumerate}
    \item[(i)] the standard inner product of $K^n$ is isomorphic over $K$ to the hyperbolic form $h_{n,K}$
    
    \smallskip
    
    \item[(ii)] the standard inner product of $K^{2^{s+1}}$
    is isomorphic over $K$ to $h_{2^{s+1},K}$

    \smallskip
    \item[(iii)] $-1$ is a sum of $2^s$ squares in $K.$
\end{enumerate}
\end{lem}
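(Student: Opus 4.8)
The plan is to prove the equivalence by a short cycle (iii) $\Rightarrow$ (ii) $\Rightarrow$ (i) $\Rightarrow$ (iii), keeping track of quadratic forms up to isometry (equivalently, in the Witt ring $W(K)$, combined with rank information since $h_{n,K}$ is hyperbolic). Recall that over a field of odd characteristic, $h_{2,K}\cong\langle 1,-1\rangle$, so $h_{n,K}\cong n\times\langle 1,-1\rangle$ as a form of rank $n$; and a rank-$n$ form is isometric to $h_{n,K}$ precisely when it is hyperbolic, i.e.\ trivial in $W(K)$. The key input is the theory of Pfister forms: the form $2^{t}\times\langle 1\rangle$ is the $t$-fold Pfister form $\langle\langle -1,\ldots,-1\rangle\rangle$, hence multiplicative, and a Pfister form is hyperbolic if and only if it is isotropic.

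First I would do (iii) $\Rightarrow$ (ii). If $-1=a_1^2+\cdots+a_{2^s}^2$ in $K$, then $2^s\times\langle 1\rangle$ represents $-1$, hence so does $2^{s+1}\times\langle 1\rangle=(2^s\times\langle 1\rangle)\perp(2^s\times\langle 1\rangle)$; being the $(s{+}1)$-fold Pfister form $\langle\langle-1,\ldots,-1\rangle\rangle$ and isotropic (it represents both $1$ and $-1$), it is hyperbolic, so of its rank $2^{s+1}$ it must be $\cong h_{2^{s+1},K}$. Next, (ii) $\Rightarrow$ (i): write $n=2^{s+1}r$ with $r$ odd, so $n\times\langle 1\rangle\cong r\times(2^{s+1}\times\langle 1\rangle)\cong r\times h_{2^{s+1},K}\cong h_{n,K}$, the last isometry because an orthogonal sum of hyperbolic planes is hyperbolic of the correct rank. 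Finally, (i) $\Rightarrow$ (iii): if $n\times\langle 1\rangle\cong h_{n,K}$ then $n\times\langle 1\rangle$ is hyperbolic, hence trivial in $W(K)$; in particular it is isotropic, so $-1$ is a sum of at most $n-1$ squares, i.e.\ the level $s(K)$ is finite. By Pfister's theorem the level is a power of $2$, say $s(K)=2^k$; then $2^k\times\langle 1\rangle$ represents $-1$ while $2^{k-1}\times\langle 1\rangle$ does not, so the $(k{+}1)$-fold Pfister form $2^{k+1}\times\langle 1\rangle$ is isotropic (being hyperbolic, in fact) whereas $2^{k}\times\langle 1\rangle$ is anisotropic. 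Now comparing with $n\times\langle 1\rangle\cong h_{n,K}$: since $h_{n,K}$ is hyperbolic, its anisotropic part is $0$, so the anisotropic part of $n\times\langle 1\rangle$ is $0$; writing $n\times\langle 1\rangle$ as a multiple of $2^{k}\times\langle 1\rangle$ plus a remainder of smaller $2$-adic valuation and using that $2^k\times\langle1\rangle$ is anisotropic while $2^{k+1}\times\langle1\rangle$ is hyperbolic forces $2^{k+1}\mid n$, i.e.\ $k\le s$; hence $-1$ is a sum of $2^{k}\le 2^{s}$ squares, and a fortiori a sum of $2^{s}$ squares (pad with zeros).

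The step I expect to be the main obstacle is the last one, (i) $\Rightarrow$ (iii): getting from "$n\times\langle 1\rangle$ is hyperbolic" down to the sharp bound "$-1$ is a sum of $2^s$ squares" (not merely a sum of $\le n$ squares) really uses that the level is a $2$-power (Pfister) together with the $2$-adic valuation of $n$, via the behaviour of Pfister forms under "rounding". Concretely, one wants the clean statement: if $2^{s+1}r\times\langle 1\rangle$ is hyperbolic with $r$ odd, then $2^{s+1}\times\langle 1\rangle$ is already hyperbolic; this is exactly the equivalence (i) $\Leftrightarrow$ (ii), and it follows because in $W(K)$ the torsion is $2$-primary and, more precisely, $s(K)$ being a $2$-power means the ideal generated by $2^{s+1}\times\langle1\rangle$ is detected by the valuation. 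I would phrase this cleanly by invoking: $n\times\langle1\rangle$ is hyperbolic $\iff$ $s(K)\le n$ is finite and $s(K)=2^{k}$ with $2^{k+1}\mid 2^{\,\text{(some multiple)}}$; the cleanest packaging is Pfister's result that $s(K)\in\{1,2,4,\ldots\}$ together with the fact that $2^{s+1}\times\langle 1\rangle$ is hyperbolic iff it is isotropic iff $s(K)\le 2^{s}$ iff $-1$ is a sum of $2^s$ squares. Everything else is bookkeeping with ranks and the observation that over odd characteristic "$\cong h_{n,K}$" and "hyperbolic of rank $n$" coincide.
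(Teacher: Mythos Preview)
Your argument is correct. The route differs only slightly from the paper's, and both live in the same circle of ideas (Witt ring plus Pfister-form theory).

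The paper proves (i)$\Leftrightarrow$(ii) in one stroke by observing that both statements say $r\cdot(2^{s+1}\times\langle 1\rangle)=0$, respectively $2^{s+1}\times\langle 1\rangle=0$, in $W(K)$, and then quoting that $W(K)$ has no odd torsion. It then proves (ii)$\Leftrightarrow$(iii) via the multiplicativity of the Pfister form $2^{s}\times\langle 1\rangle$: if that form is not already hyperbolic, any represented value is a similarity factor, so representing $-1$ forces $2^s\times\langle 1\rangle\simeq 2^s\times\langle -1\rangle$. You instead run the cycle (iii)$\Rightarrow$(ii)$\Rightarrow$(i)$\Rightarrow$(iii), replacing the ``no odd torsion'' citation by Pfister's level theorem: once $n\times\langle 1\rangle$ is hyperbolic the level is finite, hence equals $2^{k}$; then $2^{k}\times\langle 1\rangle$ is anisotropic while $2^{k+1}\times\langle 1\rangle$ is hyperbolic, so the additive order of $\langle 1\rangle$ in $W(K)$ is exactly $2^{k+1}$, forcing $2^{k+1}\mid n$ and hence $k\le s$. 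This is the same fact (the order of $\langle 1\rangle$ in $W(K)$ is a $2$-power) reached from the other side; what you gain is that Pfister's level theorem is a very familiar reference point, what the paper gains is that the single torsion statement dispatches (i)$\Leftrightarrow$(ii) without computing the level at all.

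Two small expositional remarks. First, in (iii)$\Rightarrow$(ii), the parenthetical ``it represents both $1$ and $-1$'' is not by itself the reason for isotropy; the clean line is $0=1^{2}+a_{1}^{2}+\cdots+a_{2^{s}}^{2}$, which exhibits a nontrivial zero of $(2^{s}+1)\times\langle 1\rangle$ and a fortiori of $2^{s+1}\times\langle 1\rangle$. Second, in (i)$\Rightarrow$(iii), the phrase ``a multiple of $2^{k}\times\langle 1\rangle$ plus a remainder of smaller $2$-adic valuation'' is a bit loose; the crisp version is the order argument you give in your final paragraph (order of $\langle 1\rangle$ divides $2^{k+1}$, does not divide $2^{k}$, hence equals $2^{k+1}$, hence $2^{k+1}\mid n$). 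With those tweaks the write-up is complete.
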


\begin{proof}
If $t\geq 1$ is an integer and $b$ is a symmetric bilinear form, recall that $t\times b$ denotes the sum of $t$ orthogonal copies of $b$, that is $b\perp  \cdots \perp b$. In particular, if $\langle 1\rangle$ denotes the form $(x,y)\in K^2\mapsto xy\in K$, $t\times \langle 1\rangle$ is the standard inner product of $K^t$.

Then, property (i) is equivalent to say that 
$n\times \langle 1\rangle=0$ in the Witt group $W(K)$, that is
$r\times (2^{s+1}\times \langle 1\rangle)=0\in W(K).$

Similarly, (ii) is equivalent to $2^{s+1}\times\langle 1\rangle =0\in W(K).$ Since $r$ is odd, the equivalence of (i) and (ii) comes from the fact that $W(K)$ does not contain any  non-zero element of odd order by \cite[Chapter 2, Thm 10.12]{Sch}.

We now prove that (ii) and (iii) are equivalent.

Assuming we have (ii), we get $2^s\times \langle 1\rangle=-(2^s\times \langle 1\rangle )\in W(K).$ Since these two forms have same rank, this is equivalent to $2^s\times\langle 1\rangle\simeq -(2^s\times \langle 1\rangle)=2^s\times \langle -1 \rangle.$ 

Now the bilinear form on the right represents $-1$. Since isomorphic bilinear forms represent the same values, the bilinear form on the left also represents $-1$. But this form represents the sums of $2^s$ squares, and we get (iii).

Finally, assume that we have (iii).
First, the form $2^s\times \langle 1\rangle$ is a multiplicative form (start with the form $\langle 1\rangle$ and apply \cite[Chapter 2, Lemma 10.4]{Sch} several times with $\alpha=1$).

If $2^s\times \langle 1\rangle$ is hyperbolic, that is, isomorphic to $h_{2^s,K}$, then $2^{s+1}\times \langle 1\rangle$ is isomorphic to $h_{2^{s+1},K}$, since these two forms are canonically isomorphic to $2\times (2^s\times\langle 1\rangle)$ and $2\times h_{2^s,K}$ respectively. Otherwise, $2^s\times \langle 1\rangle$ is anisotropic. Therefore,  (iii) exactly says that $2^s\times \langle 1\rangle$ represents $-1$.  The multiplicativity property then implies that $2^s\times\langle 1\rangle\simeq -(2^s\times \langle 1\rangle)$. This means that $2^{s+1}\times \langle 1\rangle =0\in W(K),$ which is equivalent to (ii).
\end{proof}

We now use this lemma to rephrase our previous results.

\begin{lem}\label{selfdualexistv2}
Let $L/F$ be a field extension of degree $m$, and let $n$ be an even integer, where $m\geq n$. Write $n=2^{s+1}r,$ where $s\geq 0$ and $r$ is odd.

Assume that there exists a self-dual code $C\subset L^n.$ Then $-1$ is a sum of $2^s$ squares in $L.$
\end{lem}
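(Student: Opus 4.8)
The plan is to combine Lemma \ref{selfdualexist} with the characterization in Lemma \ref{level}, applied over the field $L$. Since the excerpt already gives us these two ingredients, the proof is essentially a two-line deduction, and the only care needed is to make sure the hypotheses of each lemma are met.

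First I would invoke Lemma \ref{selfdualexist}: the existence of a self-dual code $C\subset L^n$ forces the standard inner product (the unit form $n\times\langle 1\rangle$) of $L^n$ to be isomorphic over $L$ to the hyperbolic form $h_{n,L}$. This is exactly condition (i) of Lemma \ref{level} with $K=L$. Here I must note that $L$ has odd characteristic, since $F$ does and $L/F$ is a field extension, so Lemma \ref{level} applies to $K=L$.

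Next, writing $n=2^{s+1}r$ with $r$ odd (the same decomposition used throughout), Lemma \ref{level} gives the equivalence (i)$\Leftrightarrow$(iii), so condition (i) over $L$ translates to: $-1$ is a sum of $2^s$ squares in $L$. This is precisely the conclusion sought, so the proof is complete.

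There is essentially no obstacle here — the statement is a direct corollary of the two preceding lemmas, packaged for convenience. The only thing to verify is the bookkeeping: that the decomposition $n=2^{s+1}r$ matches the one in Lemma \ref{level}, and that oddness of the characteristic of $L$ is noted so that Lemma \ref{level} is legitimately applicable. Concretely:

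\begin{proof}
Since $F$ has odd characteristic and $L/F$ is a field extension, $L$ also has odd characteristic. Assume that there exists a self-dual code $C\subset L^n.$ By Lemma \ref{selfdualexist}, the standard inner product of $L^n$, which is nothing but the unit form $n\times\langle 1\rangle$ of rank $n$, is isomorphic over $L$ to the hyperbolic form $h_{n,L}.$ Thus property (i) of Lemma \ref{level} holds for $K=L.$ Writing $n=2^{s+1}r$ with $s\geq 0$ and $r$ odd, the equivalence of (i) and (iii) in Lemma \ref{level} yields that $-1$ is a sum of $2^s$ squares in $L.$
\end{proof}
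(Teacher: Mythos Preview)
Your proof is correct and is exactly the approach the paper intends: the paper presents Lemma~\ref{selfdualexistv2} as a direct rephrasing of Lemma~\ref{selfdualexist} via Lemma~\ref{level}, without even writing out a separate proof. Your two-line deduction, together with the observation that $L$ inherits odd characteristic from $F$ so that Lemma~\ref{level} applies with $K=L$, fills in precisely the omitted details.
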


\begin{thm}\label{nonexistv2}
Let $L/F$ be a field extension of degree $m$, and let $n=2d$ be an even integer, with $m\geq n$. Write $n=2^{s+1}r,$ where $s\geq 0$ and $r$ is odd.

If $-1$ is a sum of $2^s$ squares in $F$, there are no MRD self-dual codes in $L^n$. 
\end{thm}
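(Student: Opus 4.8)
The plan is to combine the previously-established Lemma \ref{level} with Theorem \ref{nonexist}. The whole point of Theorem \ref{nonexistv2} is cosmetic: it restates the hypothesis of Theorem \ref{nonexist} — namely that the unit form $n\times\langle 1\rangle$ is hyperbolic over $F$ — in the equivalent but more transparent arithmetic language provided by Lemma \ref{level}, condition (iii). So the first step is simply to observe that, since $F$ has odd characteristic and $n$ is even, Lemma \ref{level} applies with $K=F$; writing $n=2^{s+1}r$ with $r$ odd, the hypothesis ``$-1$ is a sum of $2^s$ squares in $F$'' (condition (iii)) is equivalent to ``$n\times\langle 1\rangle\simeq h_{n,F}$ over $F$'' (condition (i)), which is exactly the hypothesis of Theorem \ref{nonexist}.

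Once that equivalence is in hand, the non-existence conclusion is immediate: Theorem \ref{nonexist} directly tells us there are no MRD self-dual codes $C\subset L^n$ under precisely this hypothesis on $F$. Nothing further about the extension $L/F$ is needed; the argument in Theorem \ref{nonexist} already handles passing from $F$ to $L$ (the isomorphism $f:F^n\to F^n$ extends to $f_L:L^n\to L^n$, and Lemma \ref{fl} shows the rank distance is preserved because the extending matrix lies in $\mathrm{GL}_n(F)$).

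For the final clause — ``in particular, this holds if $m$ is odd'' — I would argue as follows. If $m$ is odd, suppose toward a contradiction that a self-dual MRD code $C\subset L^n$ existed. By Lemma \ref{selfdualexistv2} (or directly Lemma \ref{selfdualexist}), the standard inner product of $L^n$ would be isomorphic over $L$ to $h_{n,L}$; both of these are scalar extensions to $L$ of the corresponding forms over $F$, so by Springer's theorem for odd-degree extensions (\cite[Chapter 2, Corollary 5.4]{Sch}) the unit form $n\times\langle 1\rangle$ and $h_{n,F}$ are already isomorphic over $F$. Equivalently, by Lemma \ref{level}, $-1$ is a sum of $2^s$ squares in $F$, and then the first part of the present theorem gives the contradiction. (Alternatively one can just cite the ``in particular'' clause of Theorem \ref{nonexist} verbatim, which already records this.)

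I do not anticipate any real obstacle here — this is a packaging result whose substance lies entirely in the already-proved Theorem \ref{nonexist}, Lemma \ref{level}, and Springer's theorem. The only point requiring a moment's care is making sure the parameter $s$ in ``sum of $2^s$ squares'' is the same one appearing in the decomposition $n=2^{s+1}r$; this is guaranteed because we invoke Lemma \ref{level} with the same $n$, hence the same $(s,r)$. So the proof is essentially: apply Lemma \ref{level} to rewrite the hypothesis, then invoke Theorem \ref{nonexist}, then invoke Springer for the last clause.
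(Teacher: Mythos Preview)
Your proposal is correct and matches the paper's approach exactly: the paper presents Theorem \ref{nonexistv2} without a separate proof, explicitly as a rephrasing of Theorem \ref{nonexist} via Lemma \ref{level}, and the ``in particular'' clause for odd $m$ is already contained in Theorem \ref{nonexist} (proved there with the same Springer's theorem argument you outline).
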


We now examine the case $n\equiv 2 \ [4]$.

\begin{prop}\label{n24}
Let $n=2d$ be an integer, where $d\geq 1$ is odd, and let $L/F$ be a field extension of degree $m\geq n$. Then the following properties hold:

\begin{enumerate}
    \item Assume that there exists a self-dual code $C\subset L^n$. Then $-1$ is a square in $L.$

\smallskip

    \item If $-1$ is a square in $F$, there are no MRD self-dual codes in $L^n$.

\smallskip

    \item Assume that $L/F$ is a cyclic extension of degree $m=n.$ Then there exists a self-dual MRD code $C\subset L^n$ if and only if $-1$ is a square in $L$ but not in $F$. 
\end{enumerate}
\end{prop}

\begin{proof}
Items $(1)$ and $(2)$ follow from Lemma \ref{selfdualexistv2} and Theorem \ref{nonexistv2} respectively.

Assume now that $L/F$ is a cyclic extension of degree $m=n.$ It remains to construct a self-dual MRD code of dimension $d$ in the case where $-1$ is a square in $L$ but not in $F.$

Since $d$ is odd, we have $\mathbb{Z}/n\mathbb{Z}\simeq \mathbb{Z}/d\mathbb{Z}\times\mathbb{Z}/2\mathbb{Z}$. Elementary Galois theory shows that $L/F$ is the compositum of a cyclic extension $M/F$ of degree $d$ and of a separable quadratic extension $M'/F$. Since $-1$ is not a square in $F$ but is a square in $L$, and since $L/F$ has a unique quadratic subextension, we necessarily have $M'=F(i),$ where $i\in L$ satisfies $i^2=-1.$

Denote by $\tau$ a generator of ${\rm Gal}(M/F).$ Galois theory then shows that there is a unique $F$-automorphism $\sigma$ of $L$ extending $\tau$ and  such that $\sigma(i)=-i.$

Let us introduce some notation. If $K$ is a field and $\xb\in K^t$, for any ring automorphism $\varphi$ of $K$, we denote by $\varphi(\xb)$  the vector of $K^t$ obtained by applying $\varphi$ to its coordinates.

Let  $\cb_0=(c_1,\ldots, c_n)\in L^n$ whose coordinates form an $F$-basis of $L$. Let $C$ be the subspace of $L^n$ generated by $\cb_0, \sigma(\cb_0),\ldots,\sigma^{d-1}(\cb_0)$. This type of code, introduced by Gabidulin in \cite{Gab} for finite fields, is known to be MRD. For sake of completeness, since $F$ is an arbitrary field, we sketch the argument.
By definition, an element of $C$ has the form $\cb=(P(\sigma)(c_1),\ldots,P(\sigma)(c_n))$, where $P\in L[X]$ has degree $\leq d-1.$
Hence, the $F$-span of the coordinates of $\cb$ is the image of $P(\sigma)$, since $(c_1,\ldots,c_n)$ is an $F$-basis of $L$. If $\cb$ is non-zero, $P$ is also non-zero. By \cite[Theorem 5]{Gow}, $\ker(P(\sigma))$ has dimension at most $d-1$, and thus ${\rm wt}(\cb)\geq n-d+1,$ as claimed.

We now prove that there is a suitable choice of $\cb_0$ for which $C$ is self-dual. 

Let $\vb=(v_1,\ldots,v_d)\in M^d$ be a vector whose coordinates form an $F$-basis of $M.$ Then $v_1,\ldots,v_d,i v_1,\ldots,i v_d\in L$ are $F$-linearly independent, and thus form an $F$-basis of $L.$

We then set $\cb_0=(\vb\mid i\vb).$ By definition of $\sigma$, for all $0\leq k\leq d-1$, we have 
$$\sigma^k(\cb_0)=(\tau^k(\vb)\mid (-1)^k i\tau^k(\vb)). $$
For all $0\leq k\leq\ell\leq d-1,$ we then have 
$$\sigma^k(\cb_0)(\sigma^\ell(\cb_0))^t=\sigma^k(\cb_0(\sigma^{\ell-k}(\cb_0))^t)=(1-(-1)^{\ell-k})\sigma^k(\vb(\tau^{\ell-k}(\vb)^t)). $$
If $k=\ell$, this quantity is zero.
Hence, it is enough to find $\vb$ such that $\vb(\tau^j(\vb))^t=0$ for all $1\leq j\leq d-1.$
By \cite[Theorem 5.6 ]{BL}, $M/F$ has a self-dual normal basis ${\vb}=(a,\tau(a),\ldots,\tau^{d-1}(a))$ (with respect to the trace map). For all $1\leq j\leq d-1, $ we have $$\vb(\tau^j(\vb))^t=\sum_{k=0}^{d-1}\tau^k(a)\tau^{j+k}(a)=\sum_{k=0}^{d-1}\tau^k(a\tau^j(a))={\rm Tr}_{M/F}(a\tau^j(a)).$$
Since $1\leq j\leq d-1$, this last quantity is zero by definition of a self-dual normal basis. This concludes the proof.
\end{proof}

\begin{rem}
Since every field extension $L/F$ of degree $2$ is cyclic if $F$ has odd characteristic, this proposition shows that for $n=m=2,$ a self-dual MRD code exists in $L^2$ if and only if $-1$ is a square in $L$ but not in $F$, or equivalently if and only if $-1$ is not a square in $F$ and $L=F(i),$ where $i^2=-1.$
\end{rem}

To end this section, we apply our results to finite fields. 

\begin{thm}\label{finite}
 Let $F=\ff_q$, and let $L=\ff_{q^m}$. Finally, let $n\geq 1$ be an even integer such that $m\geq n.$ Then the following properties hold:

 \begin{enumerate}
     \item If $q$ is even  or $q\equiv 1[ 4]$, there are no self-dual MRD codes in $L^n$.

     \item If  $q$ is odd and $n\equiv 0 \ [4]$, there are no self-dual MRD codes in $L^n$.         
     
     \item If  $q\equiv 3 \ [4]$ and $m=n\equiv 2 \ [4]$, there is a self-dual MRD code in $L^n$.
\end{enumerate}

     In particular, if $m=n$, there is a self-dual MRD code in $L^n$ if and only if $q\equiv 3 \ [4]$ and $n\equiv 2 \ [4].$
\end{thm}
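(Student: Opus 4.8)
The plan is to deduce Theorem \ref{finite} from the results already assembled in this section, treating the three items essentially as corollaries and then combining them to get the final ``if and only if'' statement in the case $m=n$.

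\textbf{Items (1) and (2).} For item (1), if $q$ is even then $F=\ff_q$ has characteristic two, so Proposition \ref{car2} applies directly and there are no self-dual MRD codes. If $q\equiv 1\ [4]$, then $-1$ is a square in $\ff_q$; writing $n=2^{s+1}r$ with $r$ odd and $s\geq 0$, the element $-1$ is in particular a sum of $2^s$ squares in $F$ (a single square suffices, and $2^s\geq 1$), so Theorem \ref{nonexistv2} gives the non-existence. For item (2), suppose $q$ is odd and $n\equiv 0\ [4]$, so $s\geq 1$ in the decomposition $n=2^{s+1}r$. Then $2^s\geq 2$, and over a finite field $\ff_q$ every element — in particular $-1$ — is a sum of two squares (this is the classical Chevalley--Warning / counting argument: the equation $x^2+y^2=-1$ always has a solution in $\ff_q$). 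Hence $-1$ is a sum of $2^s$ squares in $F$, and Theorem \ref{nonexistv2} again yields non-existence.

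\textbf{Item (3).} Assume $q\equiv 3\ [4]$ and $m=n\equiv 2\ [4]$. Write $n=2d$ with $d$ odd. The extension $L=\ff_{q^m}/\ff_q$ is cyclic, so we are exactly in the situation of Proposition \ref{n24}(3). It remains to check the arithmetic hypothesis there: $-1$ must be a square in $L=\ff_{q^m}$ but not in $F=\ff_q$. It is not a square in $\ff_q$ precisely because $q\equiv 3\ [4]$. It is a square in $\ff_{q^m}$ because $\ff_{q^m}^\times$ is cyclic of order $q^m-1$, and since $m$ is even and $q$ is odd we have $q^m-1\equiv 0\ [4]$ (indeed $q^2\equiv 1\ [8]$ when $q$ is odd, so $q^m-1$ is divisible by $8$, in particular by $4$), whence $-1$, the unique element of order $2$, is a square. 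Thus Proposition \ref{n24}(3) produces a self-dual MRD code.

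\textbf{The final equivalence for $m=n$.} Suppose $m=n$ and a self-dual MRD code exists. By item (1), $q$ is odd and $q\not\equiv 1\ [4]$, so $q\equiv 3\ [4]$. By item (2), $n\not\equiv 0\ [4]$; combined with $n$ even (forced by self-duality, as recalled before Lemma \ref{selfdualexist}) this gives $n\equiv 2\ [4]$, i.e. $n\equiv 3\ [4]$ is impossible for $n$ even — more precisely the stated condition $n\equiv 3\ [4]$ in the theorem should read as the pair $q\equiv 3\ [4]$ and $n\equiv 2\ [4]$, which is what items (1)--(3) characterize. Conversely, if $q\equiv 3\ [4]$ and $n=m\equiv 2\ [4]$, item (3) supplies the code. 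The main (very minor) obstacle is simply being careful with the two elementary number-theoretic facts invoked: that $-1$ is always a sum of two squares in a finite field, and that $-1$ is a square in $\ff_{q^m}$ exactly when $q^m\equiv 1\ [4]$; both are standard and require only the cyclicity of the multiplicative group and a short counting argument, so no real difficulty is expected.
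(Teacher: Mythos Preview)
Your proposal is correct and follows essentially the same approach as the paper: items (1) and (2) are derived from Proposition \ref{car2} and Theorem \ref{nonexistv2} via the facts that $-1$ is a square in $\ff_q$ when $q\equiv 1\ [4]$ and that $-1$ is always a sum of two squares in a finite field, while item (3) is an application of Proposition \ref{n24}(3) using that $L/F$ is cyclic. You add a bit more detail than the paper in verifying the hypothesis of Proposition \ref{n24}(3) (that $-1$ is a square in $L$ but not in $F$), and you correctly flag that the condition $n\equiv 3\ [4]$ in the final sentence of the statement is a typo for $n\equiv 2\ [4]$.
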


\begin{proof}
If $q$ is even, this comes from Proposition \ref{car2}.

If $q$ is odd, it is known that $-1$ is a square in $\ff_q$ if and only if $q\equiv 1 \ [4]$. Then $(1)$ comes from Proposition \ref{car2} and Theorem \ref{nonexistv2} (since a square is also a sum of $2^s$ squares for all $s\geq 0$).

Assume now that $q$ is odd and $n\equiv 0 \ [4]$. Write $n=2^{s+1}r$, where $r$ is odd, and $s\geq 1$ by assumption. It is a standard fact that $-1$ is the sum of two squares in $\ff_q.$ In particular, it is also a sum of $2^s$ squares in $\ff_q$. Item $(2)$ then comes from Theorem \ref{nonexistv2}. Item $(3)$ is just an application of Proposition \ref{n24}, since the extension $L/F$ is cyclic.

The last part of the theorem is clear.
\end{proof}

\section{Gabidulin self-duality versus Delsarte self-duality}

In this section, we would like to relate our results on Gabidulin self-dual MRD codes $C\subset L^n$  to those already known for Delsarte self-dual MRD codes $\mathscr{C}\subset \M_{m\times n}(F)$.

As already briefly mentioned in the introduction, the two types of duality do not coincide: if $C\subset L^n$ is self-dual, there is no reason for $\mathscr{C}=M_{\ab}(C)\subset \M_{m\times n}(F)$ to be self-dual, as the following example already shows.

\begin{ex}\label{exf3i}
Let $F=\ff_3, $ let $L=\ff_3(i),$ where $i^2=-1,$ and set $C=L(1,i)\subset L^2.$
Then $C$ is clearly self-dual.

Let $\alpha_1=1$ and $\alpha_2=i$, so that $\ab=(\alpha_1,\alpha_2)$ is an $F$-basis of $L$. As an $F$-vector space, $C$ is spanned by $(1,i)$ and $(i,-1)$, so the  code $M_{\ab}(C)$ is the $F$-subspace of $\M_2(F)$ spanned by the identity matrix $\begin{pmatrix}
  1 & 0 \cr 0 & 1  
\end{pmatrix}$ and $\begin{pmatrix}
 0 & -1 \cr 1 & \phantom{-}0   
\end{pmatrix}$.

Clearly, $M_{\ab}(C)$  is not self-dual for the inner product $(M,N)\in \M_2(F)\times \M_2(F)\mapsto {\rm tr}(MN^t)$.

However, another choice of basis $L/F$ yields a self-dual code.

Take this time $\alpha'_1=i$ and $\alpha'_2=1-i$, so $\ab'=(\alpha'_1,\alpha'_2)$ is an $F$-basis of $L$.

We have $(1,i)=(\alpha'_1+\alpha'_2,\alpha'_1)$ and $(i,-1)=(\alpha'_1,-\alpha'_1-\alpha'_2)$, so  $M_{\ab'}(C)$  is spanned by the two matrices $\begin{pmatrix}
    1 & 1 \cr 1 & 0
\end{pmatrix}$ and $\begin{pmatrix}
    1 & -1 \cr 0 & -1
\end{pmatrix}$.

One may check that $M_{\ab'}(C)$  is indeed self-dual.
\end{ex}

The previous example shows that one may obtain a self-dual code $M_{\ab}(C)$ from a self-dual code $C$, provided we choose a suitable $F$-basis of $L$. The next lemma, which is an immediate generalization of \cite[Theorem 21]{Rav}, 
gives a sufficient condition for the existence of such a basis. 

{\bf Notation. }If  $B\in \M_n(F)$ is an invertible symmetric matrix, we will denote by $\varphi_B$ and $\psi_B$ the bilinear forms defined by $$\varphi_B: (\cb,\cb')\in L^n\times L^n\mapsto \cb B(\cb')^t\in L$$
and $$\psi_B:(M,N)\in \M_{m\times n}(F)\times \M_{m\times n}(F)\mapsto {\rm tr }(MBN^t)\in F.$$

In particular, if $B=I_n,$ we get the standard inner products on $L^n$ and $\M_{m\times n}(F)$ respectively.

We then have the following result.

\begin{lem}\label{transfer}
Let $L/F$ be an extension of finite degree $m\geq 1$, where $F$ is a field of arbitrary characteristic. Let $s:L\to F$ be a non-zero $F$-linear map. 

Then any $F$-basis $\ab$ has a dual basis $\ab'$  with respect to $b_s:(x,y)\in L\times L\mapsto s(xy)\in F$, that is, an $F$-basis satisfying $s(\alpha_i\alpha'_j)=\delta_{ij}$ for all $1\leq i,j\leq m.$

Moreover, if $B\in \M_n(F)$ is an invertible symmetric matrix, then for all $C\subset L^n$, we have the equality $$ M_{\ab'}(C^\perp)=M_{\ab}(C)^\perp,$$
where the orthogonal spaces are defined with respect to $\varphi_B$ and $\psi_B$.

In particular, if $b_s$ has an orthonormal basis $\ab$, then for all $C\subset L^n,$
$C$ is self-dual if and only if $M_{\ab}(C)$ is self-dual.
\end{lem}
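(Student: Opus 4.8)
The statement is Lemma~\ref{transfer}, which has three parts: existence of a dual basis for the trace-like form $b_s$, the identity $M_{\ab'}(C^\perp)=M_{\ab}(C)^\perp$, and the self-duality consequence. The plan is to handle them in this order, since each builds on the previous one.

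\textbf{Step 1: existence of the dual basis.} Since $s\neq 0$, the bilinear form $b_s:(x,y)\mapsto s(xy)$ on the $F$-vector space $L$ is non-degenerate: indeed, if $s(xy)=0$ for all $y\in L$, then $s$ vanishes on $xL$, which is all of $L$ unless $x=0$ (here we use that $L$ is a field, so $xL=L$ for $x\neq 0$). Hence $b_s$ is a non-degenerate symmetric $F$-bilinear form on $L$, and any basis $\ab=(\alpha_1,\dots,\alpha_m)$ admits a dual basis $\ab'=(\alpha'_1,\dots,\alpha'_m)$ characterized by $s(\alpha_i\alpha'_j)=\delta_{ij}$; this is the standard existence of a dual basis for a non-degenerate bilinear form on a finite-dimensional space.

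\textbf{Step 2: the orthogonality identity.} This is the technical heart. Fix an invertible symmetric $B\in\M_n(F)$. Using Lemma~\ref{mc}, for $\cb\in L^n$ write $\cb=\ab M_{\ab}(\cb)$ and for $\cb'\in L^n$ write $\cb'=\ab' M_{\ab'}(\cb')$. Then I would compute $\varphi_B(\cb,\cb')=\cb B (\cb')^t = \ab M_{\ab}(\cb)\, B\, M_{\ab'}(\cb')^t\, (\ab')^t$. The middle matrix $M_{\ab}(\cb)\,B\,M_{\ab'}(\cb')^t$ lies in $\M_m(F)$; writing it as $(a_{k\ell})$, the expression becomes $\sum_{k,\ell} \alpha_k a_{k\ell}\alpha'_\ell$. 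Now apply $s$: since $s$ is $F$-linear, $s(\varphi_B(\cb,\cb'))=\sum_{k,\ell}a_{k\ell}\,s(\alpha_k\alpha'_\ell)=\sum_{k,\ell}a_{k\ell}\delta_{k\ell}=\operatorname{tr}\big(M_{\ab}(\cb)\,B\,M_{\ab'}(\cb')^t\big)=\psi_B(M_{\ab}(\cb),M_{\ab'}(\cb'))$. Thus $s\circ\varphi_B = \psi_B\circ(M_{\ab}\times M_{\ab'})$. From this, $\cb'\in C^\perp$ (orthogonal w.r.t. $\varphi_B$) means $\varphi_B(\cb,\cb')=0$ for all $\cb\in C$; since $s$ is injective on lines of the form $L$... more carefully: I need that $\varphi_B(\cb,\cb')=0$ for all $\cb\in C$ is equivalent to $s(\varphi_B(\cb,\cb'))=0$ for all $\cb\in C$. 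One direction is clear; for the other, if $\cb\in C$ then $x\cb\in C$ for all $x\in L$ (as $C$ is $L$-linear!), so $s(x\varphi_B(\cb,\cb'))=0$ for all $x$, forcing $\varphi_B(\cb,\cb')=0$ by non-degeneracy of $s$ as above. Therefore $M_{\ab'}(C^\perp)\subseteq M_{\ab}(C)^\perp$, and a dimension count (both sides have $F$-dimension $m(n-\dim_L C)$, using that $M_{\ab}$ and $M_{\ab'}$ are $F$-linear injections and that $\psi_B$ is non-degenerate since $B$ is invertible) gives equality.

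\textbf{Step 3: the self-duality consequence.} If $b_s$ admits an orthonormal basis $\ab$, i.e. $s(\alpha_i\alpha_j)=\delta_{ij}$, then $\ab$ is its own dual basis, so $\ab'=\ab$. Taking $B=I_n$ in Step~2 gives $M_{\ab}(C^\perp)=M_{\ab}(C)^\perp$ with the orthogonals now the standard inner products on $L^n$ and $\M_{m\times n}(F)$. Hence $C=C^\perp$ if and only if $M_{\ab}(C)=M_{\ab}(C^\perp)=M_{\ab}(C)^\perp$, using that $M_{\ab}$ is injective.

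\textbf{Main obstacle.} The delicate point is Step~2: correctly tracking the $F$-bilinear versus $L$-bilinear structures and justifying that passing to $s\circ\varphi_B$ loses no information about orthogonality — this is exactly where the $L$-linearity of the code $C$ (not merely $F$-linearity) is essential, and it is easy to overlook. The dimension count closing the inclusion to an equality also deserves care, as does checking that the identity $s(\alpha_k\alpha'_\ell)=\delta_{k\ell}$ is applied with indices in the right order given that $B$ need not be $I_n$.
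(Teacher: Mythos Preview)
Your proof is correct and follows the same approach as the paper: establish non-degeneracy of $b_s$, compute $s(\varphi_B(\cb,\cb'))=\psi_B(M_{\ab}(\cb),M_{\ab'}(\cb'))$ via Lemma~\ref{mc} and the dual-basis relation, deduce the inclusion $M_{\ab'}(C^\perp)\subseteq M_{\ab}(C)^\perp$, and close by a dimension count. One small remark: the inclusion itself only needs the trivial implication $\varphi_B(\cb,\cb')=0\Rightarrow s(\varphi_B(\cb,\cb'))=0$, so your $L$-linearity argument (using $x\cb\in C$ for all $x\in L$) is not required there---it would instead prove the \emph{reverse} inclusion directly and render the dimension count optional, but either route is valid and the paper uses the dimension count.
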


\begin{proof}
Note that, if $s:L\to F$ is a non-zero $F$-linear map, the bilinear form $b_s: (x,y)\mapsto s(xy)\in F$ is non-degenerate. 

Indeed, if $s(x_0)\neq 0$, then for any $x\in L^\times,$ we have $b_s(x,x^{-1}x_0)=s(x_0)\neq 0.$
In particular, an arbitrary $F$-basis $\ab$ of $L$ has a dual $F$-basis $\ab'$ with respect to $b_s$, that is, an $F$-basis satisfying $s(\alpha_i\alpha'_j)=\delta_{ij}$ for all $1\leq i,j\leq m.$

Let us keep the notation of the lemma, and let $\cb,\cb'\in L^n$.
By Lemma \ref{mc}, we have 

$$\varphi_B(\cb,\cb')=\ab M_{\ab}(\cb)BM_{\ab'}(\cb')^t (\ab')^t.$$

If we write $ M_{\ab}(\cb)BM_{\ab'}(\cb')^t=(\lambda_{ij})\in \M_m(F)$, we then get 
$$\varphi_B(\cb,\cb')=\sum_{i,j}\lambda_{ij}\alpha_i\alpha'_j.$$

Since $\ab'$ is the dual basis of $\ab$ with respect to $b_s$, applying $s$ on both sides of the equality above then yields $$s(\varphi_B(\cb,\cb'))=\sum_{i=1}^m\lambda_{ii} ={\rm tr}(M_{\ab}(\cb)B(M_{\ab'}(\cb')^t)=\psi_B(M_{\ab}(\cb),M_{\ab'}(\cb')).$$

It follows that for all $C\subset L^n$, we have $M_{\ab'}(C^\perp)\subset M_{\ab}(C)^\perp,$ and equality follows by comparing dimensions. 

The last part of the lemma is then clear, since a self-dual basis is nothing but an orthonormal basis.
\end{proof}

Given a field extension, the existence of  a linear map $s$ such that $b_s$ has an orthonormal basis is a difficult question. When $L/F$ is a separable extension, every linear map $s:L\to F$ has the form $t_\lambda: x\in L\mapsto {\rm Tr}_{L/F}(\lambda x)\in F$ for a suitable $\lambda\in L.$

Indeed,  in this case the trace form  $$\mathcal{T}_{L/F}: (x,y)\in L\times L\mapsto {\rm Tr}_{L/F}(xy)\in F$$ is non-degenerate, and therefore induces an isomorphism from $L$ onto the space of $F$-linear forms on $L$,  which maps $\lambda$ onto $t_\lambda$.

The question may then be rephrased as follows. If $L/F$ is a finite  separable field extension, does there exist $\lambda\in L^\times$ such that the $F$-bilinear form 
$$\mathcal{T}_{L/F,\lambda}: (x,y)\in L\times L\mapsto {\rm Tr}_{L/F}(\lambda xy)\in F $$ has an orthonormal basis ?

Even reformulated this way, the answer is not completely known for a field as simple as $F=\mathbb{Q}$. A result of Bender \cite{Ben} says that the answer is positive for any field extension $L/F$ of odd degree, where $F$ is a number field, but this result is not really useful in the context of MRD self-dual codes in view of Theorem \ref{nonexistv2}.

To point out the fact that the answer heavily depends on $L/F$, let us mention that if $F=\mathbb{Q}$ and $L=\mathbb{Q}(\sqrt{d})$ is a quadratic extension, then it is easy to show that such a $\lambda$ exists if and only if $d$ is a sum of two squares in $\mathbb{Q}.$

In the case of finite fields, everything is simpler, since such a $\lambda$ always exists. In other words, we have the following proposition.

\begin{prop}\label{twoduals}
Let $F=\mathbb{F}_q$, where $q$ is odd, and let $L/F$ be a finite extension of degree $m\geq 1.$

Let $\delta\in F^\times$ be a representative of $\det(\mathcal{T}_{L/F})\in F^\times/F^{\times 2},$ where $$\mathcal{T}_{L/F}: (x,y)\in L\times L\mapsto {\rm Tr}_{L/F}(xy)\in F, $$ and let $\lambda\in L^\times$ such that $N_{L/F}(\lambda)=\delta.$
Then the bilinear form $$\mathcal{T}_{L/F,\lambda}: (x,y)\in L\times L\mapsto {\rm Tr}_{L/F}(\lambda xy)\in F $$
 has an orthonormal basis $\ab$.
 
In particular, for such a basis $\ab$, and  for every  code $C\subset L^n$, the code 
$C$ is self-dual with respect to $\varphi_B$ if and only if $M_{\ab}(C)\subset \M_{m\times n}(F)$ is self-dual with respect to $\psi_B$.
\end{prop}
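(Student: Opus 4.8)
The plan is to reduce the statement to Lemma \ref{transfer}: once I know that $\mathcal{T}_{L/F,\lambda}$ has an orthonormal basis $\ab$, the final ``in particular'' clause is immediate, since $\mathcal{T}_{L/F,\lambda}=b_s$ for $s=t_\lambda:x\mapsto \mathrm{Tr}_{L/F}(\lambda x)$, which is a non-zero $F$-linear map. So the entire content is to show that $\mathcal{T}_{L/F,\lambda}$ is isomorphic to the unit form $m\times\langle 1\rangle$ over $\ff_q$ for the prescribed choice of $\lambda$. Over a finite field of odd order, a non-degenerate symmetric bilinear form is classified up to isometry by its rank and its determinant in $F^\times/F^{\times 2}$ (a standard fact, e.g. \cite[Chapter 2, \S 3]{Sch} or any text on quadratic forms over finite fields). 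Since $\mathcal{T}_{L/F,\lambda}$ has rank $m$, it suffices to check that $\det(\mathcal{T}_{L/F,\lambda})=1$ in $F^\times/F^{\times 2}$, i.e. that the determinant of a Gram matrix of $\mathcal{T}_{L/F,\lambda}$ is a square in $\ff_q$.

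The key computational step is the behaviour of the determinant under scaling by $\lambda$. Fixing any $F$-basis of $L$ and letting $G$ be the Gram matrix of $\mathcal{T}_{L/F}$, the Gram matrix of $\mathcal{T}_{L/F,\lambda}$ in the same basis is $G'$ with entries $\mathrm{Tr}_{L/F}(\lambda\alpha_i\alpha_j)$. The clean way to relate the two determinants is to write multiplication by $\lambda$ as an $F$-linear endomorphism $m_\lambda$ of $L$ and observe that $\mathcal{T}_{L/F,\lambda}(x,y)=\mathcal{T}_{L/F}(m_\lambda(x),y)$; hence $G'=M_\lambda^t G$ where $M_\lambda$ is the matrix of $m_\lambda$, so $\det(G')=\det(M_\lambda)\det(G)=N_{L/F}(\lambda)\det(G)$, using the classical identity $\det(m_\lambda)=N_{L/F}(\lambda)$. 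Therefore $\det(\mathcal{T}_{L/F,\lambda})=N_{L/F}(\lambda)\,\delta=\delta^2$ in $F^\times/F^{\times 2}$, which is trivial. I should note that such a $\lambda$ with $N_{L/F}(\lambda)=\delta$ exists because the norm map $N_{L/F}:L^\times\to F^\times$ is surjective for finite fields (it is the map $x\mapsto x^{(q^m-1)/(q-1)}$ on cyclic groups, which is onto).

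With $\det(\mathcal{T}_{L/F,\lambda})$ trivial in $F^\times/F^{\times 2}$ and rank $m$, the classification of symmetric bilinear forms over $\ff_q$ gives $\mathcal{T}_{L/F,\lambda}\simeq m\times\langle 1\rangle$, which is precisely the existence of an orthonormal basis $\ab$. The final assertion then follows by applying the last sentence of Lemma \ref{transfer} with $s=t_\lambda$ and the given invertible symmetric $B$. I expect the main (minor) obstacle to be bookkeeping: making sure the determinant-of-multiplication identity $\det(m_\lambda)=N_{L/F}(\lambda)$ is invoked correctly and that the scaling relation $G'=M_\lambda^tG$ is set up with the right transpose conventions so that no spurious factor of $\det(M_\lambda)^2$ creeps in — but this is routine linear algebra rather than a genuine difficulty.
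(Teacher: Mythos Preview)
Your proposal is correct and follows essentially the same approach as the paper: compute $\det(\mathcal{T}_{L/F,\lambda})=N_{L/F}(\lambda)\,\delta$ (the paper cites \cite[Chapter 2, Theorem 5.12]{Sch} for this, while you supply the direct Gram-matrix argument via $G'=M_\lambda^tG$), use surjectivity of the norm over finite fields to pick $\lambda$ with $N_{L/F}(\lambda)=\delta$, and then invoke the rank--determinant classification of nondegenerate symmetric forms over $\ff_q$ to conclude $\mathcal{T}_{L/F,\lambda}\simeq m\times\langle 1\rangle$; the final clause is exactly Lemma \ref{transfer}.
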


\begin{proof}
Let us keep the notation of the proposition.
By \cite[Chapter 2, Theorem 5.12]{Sch},  we have $\det(\mathcal{T}_{L/F,\lambda})=N_{L/F}(\lambda)\delta\in F^\times/F^{\times 2}.$

Since $F$ is a finite field, the norm map is surjective, so we may choose $\lambda\in L^\times$ such that  $N_{L/F}(\lambda)=\delta.$ Hence, we have $\det(\mathcal{T}_{L/F,\lambda})=1\in F^\times/F^{\times 2}.$

Therefore, the bilinear forms $\mathcal{T}_{L/F,\lambda}$ and $m\times \langle 1\rangle$ have same rank and determinant. By \cite[Chapter 2, Theorem 3.8]{Sch}, they are isomorphic. But this exactly means that $\mathcal{T}_{L/F,\lambda}$ has an orthonormal basis.
The rest of the proposition is a direct application of Lemma \ref{transfer}.
\end{proof}

\begin{ex}
Let $F=\ff_3$ and $L=\ff_3(i),$ where $i^2=-1.$ The determinant of $\mathcal{T}_{L/F}$ is easily seen to be $-1\in F^\times/F^{\times 2}$. Then $\lambda=1+i$ has norm $-1$, and $\mathcal{T}_{L/F,\lambda}$ has rank two and trivial determinant, so it should have an orthonormal basis, according to Proposition \ref{twoduals}. One may check that $\ab'=(i,1-i)$ is indeed such an orthonormal basis, which explains more conceptually the second part of Example \ref{exf3i}.
\end{ex}

Proposition \ref{twoduals} allows us to put out our results in perspective with the results of \cite{Neb}.

Assume that $F=\mathbb{F}_q$, so that $L=\mathbb{F}_{q^m}$. Fixing an  $F$-basis $\ab$ of $L$ as in Proposition \ref{twoduals}, Theorem \ref{finite} then  translates as follows:

Let $q\geq 2$ be a prime power, and let $m,n\geq 1$ be two integers, with $m\geq n.$

\begin{enumerate}
 
 \item If $q$ is even  or $q\equiv 1[ 4]$, there is no code $C\subset L^n$ such that $M_{\ab}(C)$ is a self-dual MRD code.

     \item If  $q$ is odd and $n\equiv 0 \ [4]$, there is no code    $C\subset L^n$ such that $M_{\ab}(C)$ is a self-dual MRD code.   
     
     \item If  $q\equiv 3 \ [4]$ and $m=n\equiv 2 \ [4]$, there is a self-dual MRD code $C\subset L^n$. In this case, $M_{\ab}(C)$ is a self-dual MRD code.  
\end{enumerate}

     In particular, if $m=n$, there is a self-dual MRD code of the form $M_{\ab}(C),$ where $C\subset L^n$, if and only if $q\equiv 3 \ [4]$ and $n\equiv 2 \ [4].$

     Items $(1)$ and $(2)$ seem to give strong evidence that self-dual MRD codes $\Cc\subset \M_{m\times n}(F)$ should not exist if $n\equiv 0 \ [4]$ or $-1$ is square in $\mathbb{F}_q$, as (mildly) suggested by Nebe and Willems in \cite{Neb} when $m=n$.

     Note that Items $(1)$ and $(2)$ do not assume $m=n$. However, the assumption $m=n$ is nevertheless crucial for the previous conjecture to have a chance to be true. Indeed, the code $\mathscr{C}\subset\M_{4\times 2}(\mathbb{F}_5)$ generated by the four matrices

     $$\begin{pmatrix}
        4 & 0\cr 4 & 1 \cr  0 & 3 \cr 2 & 2
     \end{pmatrix}, \ \begin{pmatrix} 4 & 0\cr  1 & 1 \cr  0 & 4 \cr 0 & 1\end{pmatrix}, \ \begin{pmatrix}
          2 & 2\cr  0 & 2 \cr  1 & 4 \cr 0 & 4
     \end{pmatrix}, \  \begin{pmatrix}
         3  & 1\cr 0  &1 \cr 3 & 3 \cr 4 & 0
     \end{pmatrix} $$

     is a self-dual MRD code (this is the code of $\M_ {4\times 2}(\mathbb{F}_5)$ corresponding to the generating matrix $G_8$ of Table $5$ in Morrison's paper \cite{Mor}, after some reformulation in order to agree with our notation).

     This apparent contradiction with Item (1) may be resolved by noticing that not all codes $\Cc\subset \M_{m\times n}(F)$ may be written under the form $M_{\ab}(C)$ where $C\subset L^n$.

     Indeed, the code $M_{\ab}(C)$ inherits from $C$ a structure of an $L$-vector space, given by $$(\lambda, M_{\ab}(\cb))\in L\times M_{\ab}(C)\mapsto  M_{\ab}(\lambda \cb)\in  M_{\ab}(C).$$

     This example just says that there are self-dual MRD codes $\Cc\subset \M_{4\times 2}(\mathbb{F}_5)$ which are not obtained from a self-dual MRD code $C\subset L^2,$ where $L=\mathbb{F}_{5^4}.$ 

     Nevertheless, one could still ask the following question:

     {\bf Question. }Let $F=\mathbb{F}_q$, with $q$ odd. Assume that $m=n$ is even, and that either $-1$ is a square in $F$ or that $n\equiv 0 \ [4]$. Is it true that there is no self-dual MRD code in $\M_n(F)$ ?

\section{The case of characteristic two revisited}

Let $F$ be a field of characteristic two, and let $L/F$ be a field extension of degree $m\geq 1,$ with $m\geq n$.

Proposition \ref{car2} shows that there is no self-dual MRD codes $C\subset L^n$ in this case. Even if Lagrangian MRD codes do not exist in odd characteristic (Theorem \ref{MRDhyp}), we now proceed to show that they may exist in characteristic two, providing a natural substitute for classical self-dual MRD codes in this case. 

\begin{prop}\label{proplag2}
Let $F$ be a field of characteristic two.

Let $n=2d,$ where $d$ is odd or $d\equiv 2 \ [4],$ and let $L/F$ be an extension of degree $m=n$ with Galois group $\mathbb{Z}/d\mathbb{Z}\times \mathbb{Z}/2\mathbb{Z}$. Then there exists a Lagrangian MRD code $C\subset L^n$.  
\end{prop}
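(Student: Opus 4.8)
The plan is to exhibit an explicit $d$-dimensional $L$-subspace $C\subseteq L^n$ which is totally isotropic for $h_{n,L}$ --- hence Lagrangian, since $\dim_L(C)=d=n/2$ --- and whose minimum weight equals $d+1$, so that by the Singleton bound and Lemma \ref{wt} it is MRD. First I would set up the Galois picture: write $\mathrm{Gal}(L/F)=\langle\tilde\tau\rangle\times\langle\rho\rangle$ with $\tilde\tau$ of order $d$ and $\rho$ of order $2$, put $M=L^{\langle\rho\rangle}$ (cyclic over $F$ of degree $d$, with $\mathrm{Gal}(M/F)=\langle\tau\rangle$, $\tau=\tilde\tau|_M$) and $M'=L^{\langle\tilde\tau\rangle}$. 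Since $\mathrm{char}(F)=2$ and $M'/F$ is separable quadratic, $M'=F(\theta)$ with $\theta^2+\theta=a_0\in F$ and $\rho(\theta)=\theta+1$, and $L=M\otimes_F M'=M\oplus\theta M$. The decisive input is that, because $\mathrm{char}(F)=2$ and $4\nmid d$ --- this is exactly where the hypothesis ``$d$ odd or $d\equiv 2\ [4]$'' enters --- the cyclic extension $M/F$ admits a self-dual normal basis $\vb=(a,\tau(a),\dots,\tau^{d-1}(a))$ for the trace, i.e. ${\rm Tr}_{M/F}(a\,\tau^m(a))=1$ if $d\mid m$ and $0$ otherwise (the case $d$ odd being \cite[Thm 5.6]{BL}, and the case $d\equiv 2\ [4]$ its characteristic-two counterpart).

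Next I would set $\cb_0=(\vb,\theta\vb)\in L^n$, whose coordinates $v_1,\dots,v_d,\theta v_1,\dots,\theta v_d$ form an $F$-basis of $L$, and, for disjoint subsets $S_0,S_1\subseteq\{0,\dots,d-1\}$ with $|S_0|+|S_1|=d$, consider
$$C=\mathrm{Span}_L\big(\{\tilde\tau^k\cb_0:k\in S_0\}\cup\{\tilde\tau^\ell\rho\,\cb_0:\ell\in S_1\}\big).$$
Two facts are then checked directly. These $d$ generators are $L$-linearly independent (independence of characters, using that $\vb$ is an $F$-basis of $M$), so $\dim_L(C)=d$. And, writing $\tilde\tau^k\rho^\varepsilon\cb_0=(\tau^k\vb,(\theta+\varepsilon)\tau^k\vb)$ and $h_{n,L}((\xb_1,\xb_2),(\yb_1,\yb_2))=\xb_1\yb_2^t+\xb_2\yb_1^t$, one finds in characteristic two
$$h_{n,L}(\tilde\tau^k\rho^{\varepsilon}\cb_0,\tilde\tau^\ell\rho^{\varepsilon'}\cb_0)=(\varepsilon+\varepsilon')\,\tau^k\big({\rm Tr}_{M/F}(a\,\tau^{\ell-k}(a))\big),$$
which vanishes identically: if $\varepsilon=\varepsilon'$ the scalar is $0$, and if $\varepsilon\neq\varepsilon'$ then $k\neq\ell$ with $|\ell-k|<d$, so $d\nmid(\ell-k)$ and the trace term is $0$ by the self-dual normal basis property. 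Hence $C$ is totally isotropic, and being $d$-dimensional it equals $C^{\perp_n}$, so $C$ is Lagrangian. For the weight, since the coordinates of $\cb_0$ span $L$ over $F$, a general element $\cb$ of $C$ has $\wt(\cb)=\mathrm{rank}_F(\Phi)$ where $\Phi=\sum_{k\in S_0}\lambda_k\tilde\tau^k+\sum_{\ell\in S_1}\mu_\ell\tilde\tau^\ell\rho\in\mathrm{End}_F(L)$; so $C$ is MRD iff for a good choice of $S_0,S_1$ every nonzero such $\Phi$ satisfies $\dim_F\ker(\Phi)\le d-1$.

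When $d$ is odd I would take $S_0=\{k:k\ \text{even}\}$, $S_1=\{k:k\ \text{odd}\}$; then $\Phi=\sum_{k=0}^{d-1}\lambda_k\sigma^k$ with $\sigma=\tilde\tau\rho$, and since $\gcd(d,2)=1$ the element $\sigma$ generates $\mathrm{Gal}(L/F)\cong\mathbb{Z}/2d\mathbb{Z}$, so $\Phi$ is a nonzero $\sigma$-polynomial of degree $\le d-1<[L:F]=2d$ and Gow's theorem \cite[Thm 5]{Gow} yields $\dim_F\ker(\Phi)\le d-1$. This is the characteristic-two, hyperbolic-form analogue of the construction in Proposition \ref{n24} (with $\theta$ playing the role of $i$), and it settles the case $d$ odd.

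The hard part is the case $d\equiv 2\ [4]$, say $d=2e$ with $e$ odd. Now $\mathrm{Gal}(L/F)$ has exponent $d$, no single element generates it, and the even/odd index sets only give $\sigma$ of order $d$, for which Gow's bound degrades to $\dim_F\ker(\Phi)\le 2(d-1)$ --- far too weak; in fact one checks that a code generated by the Galois orbit of one vector under \emph{any} single automorphism always contains a word of weight well below $d+1$, so a genuinely two-parameter (``twisted Gabidulin'') construction is unavoidable. The plan here is to keep the isotropy argument above --- which, as shown, holds for any disjoint $S_0,S_1$ and costs nothing beyond characteristic two and the self-dual normal basis --- and to win MRD-ness by exploiting the tower $L\supseteq W\supseteq F$, where $W=L^{\langle\tau^2\rangle}$ is a biquadratic $(\mathbb{Z}/2\mathbb{Z})^2$-extension of $F$ and $L/W$ is cyclic of odd degree $e$: one replaces the ``degree-$d$ cyclic Gabidulin'' input by a ``degree-$e$ Gabidulin over the cyclic subextension of degree $e$'' together with a self-dual normal basis of $W/F$ (which again exists because $\mathrm{char}(F)=2$ and $4\nmid d$), chooses $S_0,S_1$ so that the operator $\Phi$, written in the $M$-coordinates of $L=M\oplus\theta M$ as a $2\times 2$ matrix of $\tau$-polynomials over $M$, has all its Schur complements nonzero, and then applies Gow's theorem to $\tau$ on $M$, where $\tau$ does have full order $d=[M:F]$, to force $\dim_F\ker(\Phi)\le d-1$. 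I expect the delicate bookkeeping in this last step --- arranging the degrees and supports so that the $2\times 2$ operator cannot degenerate --- to be the main obstacle; note that the construction must fail when $4\mid d$, since then $M/F$ has no self-dual normal basis in characteristic two, which is consistent with the stated range.
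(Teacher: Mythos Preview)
Your treatment of the case $d$ odd is correct and coincides with the paper's proof: both take $\sigma$ to be the automorphism extending $\tau$ and sending the Artin--Schreier generator (the paper's $\alpha$, your $\theta$) to $\alpha+1$ --- this is your $\tilde\tau\rho$ --- set $\cb_0=(\vb,\alpha\vb)$, let $C$ be the $L$-span of $\cb_0,\sigma(\cb_0),\ldots,\sigma^{d-1}(\cb_0)$, invoke Gow's bound for the MRD property (valid since $\sigma$ has order $\mathrm{lcm}(d,2)=2d=[L:F]$ when $d$ is odd), and use the self-dual normal basis of $M/F$ to verify total isotropy for $h_{n,L}$. Your $(S_0,S_1)$ framework with $S_0=\{\text{evens}\}$, $S_1=\{\text{odds}\}$ is exactly this construction in disguise.

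For $d\equiv 2\ [4]$ your diagnosis is correct --- and it exposes a genuine gap in the paper's own argument. The paper uses the \emph{same} construction in that case: it defines $\sigma$ as above, takes $C=\langle\sigma^k(\cb_0):0\le k\le d-1\rangle_L$, and asserts ``so that $C$ is an MRD code'', referring back to the argument of Proposition~\ref{n24}. But when $d$ is even, $\sigma=(\tau,\rho)$ has order $\mathrm{lcm}(d,2)=d$, not $2d$, so the Gow bound no longer yields $\dim_F\ker P(\sigma)\le d-1$; and in fact the code is \emph{not} MRD. Indeed, $P(X)=X^{d/2}+1$ has degree $d/2\le d-1$, and since $d/2$ is odd one computes that $\sigma^{d/2}$ has order $2$, so $\ker(\sigma^{d/2}+1)=L^{\langle\sigma^{d/2}\rangle}$ has $F$-dimension $d$, giving a nonzero codeword $P(\sigma)(\cb_0)\in C$ of weight exactly $d<d+1$. (Already for $d=2$ the codeword $\cb_0+\sigma(\cb_0)$ has weight $2$.) So you were right that a single-automorphism Gabidulin code cannot work here. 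That said, your proposed fix remains a plan rather than a proof: the ``Schur complement'' strategy for bounding $\ker\Phi$ when $\Phi$ genuinely mixes $\tilde\tau$ and $\rho$ is plausible, but until you specify $S_0,S_1$ and actually carry out the kernel bound you have not established the MRD property either. In short: for $d$ odd your argument is complete and matches the paper; for $d\equiv 2\ [4]$ neither your proposal nor the paper's proof, as written, proves the result.
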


\begin{proof}
The proof being almost identical to the proof of   
Proposition \ref{n24}, we only point out the necessary 
modifications.

Elementary Galois theory shows that $L/F$ is the compositum of a cyclic extension $M/F$ and of a separable quadratic extension $F(\alpha)/F$, where $\alpha^2-\alpha\in F.$

Denote by $\tau$ a generator of ${\rm Gal}(M/F).$
By \cite[Theorem 6.1]{BL}, the assumption on $d$ implies that $M/F$ has a self-dual normal basis ${\vb}=(a,\tau(a),\ldots,\tau^{d-1}(a)).$

Moreover, Galois theory shows that there is a unique $F$-automorphism $\sigma$ of $L$ extending $\tau$ and  such that $\sigma(\alpha)=\alpha+1.$

Let  $\cb_0=(\vb\mid \alpha\vb)\in L^n$. Then its coordinates form an $F$-basis of $L$. Let $C$ be the subspace of $L^n$ generated by $\cb_0, \sigma(\cb_0),\ldots,\sigma^{d-1}(\cb_0)$, so that $C$ is an MRD code.

For all $0\leq k\leq d-1$, we have $\sigma^k(\cb_0)=(\tau^k(\vb)\mid (\alpha+k)\tau^k(\vb)).$

Moreover, for all $0\leq k\leq \ell\leq d-1$, we have this time $$h_{n,L}(\tau^k(\cb_0),\tau^\ell(\cb_0))=(\ell-k)\tau^k(\vb (\tau^{\ell-k}(\vb))^t).$$
We may now conclude as in the proof of Proposition \ref{n24}.
\end{proof}

\begin{rem}
If $F$ is a finite field, then the Galois group of $L/F$ is necessarily cyclic. Hence, the previous result can only be applied in the case where $d$ is odd, that is when $n\equiv 2 \ [4]$.
\end{rem}

We now address the question of the relationship  between Lagrangian MRD codes in $L^n$ and MRD codes $\Cc\subset \M_{m\times n}(F)$ which are self-dual for a suitable notion of duality. As we have seen before, we cannot take the standard inner product on $\M_{m\times n}(F)$, so we have to replace it by a suitable non-degenerate symmetric bilinear form.

Note that the bilinear form $\varphi_{H_n}$ on $L^n$ is just the standard hyperbolic form $h_{n,L}.$
In view of Lemma \ref{transfer}, a natural choice would be 
 the bilinear form $$\psi_{H_n}:(M,N)\in\M_{m\times n}(F)\times \M_{m\times n}(F)\mapsto {\rm tr}(MH_nN^t)\in F,$$ where $H_n=\begin{pmatrix}0 & I_d \cr I_d & 0\end{pmatrix}\in\M_n(F)$.

 \begin{defn}
 A code $\Cc\subset \M_{m\times n}(F)$ is called a {\it Lagrangian code}  if it self-dual with respect to $\psi_{H_n}$. 
\end{defn}

Lemma \ref{transfer} will give us the required duality statement, provided that we can find a non-zero linear map $s:L\to F$ such that the corresponding bilinear form $b_s:L\times L\to F$ has an orthonormal basis. 

Contrary to the case of odd characteristic, the answer is completely known for $F$-algebras of the form $F[X]/(f)$ (see \cite{Be}).
In particular, if $L/F$ is a separable field extension (which is more than enough in the context of rank-metric codes), one may take $s={\rm Tr}_{L/F}$ (see \cite[Lemma 3.2]{Be}).

Therefore, we get the following corollary.

\begin{cor}
 Let $F$ be a field of characteristic two, and let $L/F$ be a finite separable extension of degree $m\geq 1.$ Let $\ab=(\alpha_1,\ldots,\alpha_m)$ be an $F$-orthonormal basis with respect to the bilinear form $\mathcal{T}_{L/F}:(x,y)\in L\times L\mapsto {\rm Tr}_{L/F}(xy).
 $
 Then for any $C\subset L^n$, we have the equality $$ M_{\ab}(C^\perp)=M_{\ab}(C)^\perp.$$

In particular, $C$ is a Lagrangian code if and only if $M_{\ab}(C)$ is a Lagrangian code.
\end{cor}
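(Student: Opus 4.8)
The plan is to apply Lemma~\ref{transfer} with $B=H_n$ and $s={\rm Tr}_{L/F}$, so the only real task is to verify the hypotheses of that lemma. First I would note that, since $L/F$ is a finite separable extension, the trace form $\mathcal{T}_{L/F}$ is non-degenerate, and by \cite[Lemma 3.2]{Be} the linear map $s={\rm Tr}_{L/F}$ satisfies the condition that the associated symmetric bilinear form $b_s=\mathcal{T}_{L/F}$ admits an orthonormal basis $\ab=(\alpha_1,\ldots,\alpha_m)$; this is exactly the basis fixed in the statement. In particular $\ab$ is its own dual basis with respect to $b_s$, i.e. $\ab'=\ab$ in the notation of Lemma~\ref{transfer}.

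Next I would invoke the middle assertion of Lemma~\ref{transfer}: for $B=H_n$ (an invertible symmetric matrix over $F$, since $H_n^2=I_n$), and for any $C\subset L^n$, one has $M_{\ab'}(C^\perp)=M_{\ab}(C)^\perp$, where the orthogonals are taken with respect to $\varphi_{H_n}$ on $L^n$ and $\psi_{H_n}$ on $\M_{m\times n}(F)$. Since $\ab'=\ab$ here, this reads $M_{\ab}(C^\perp)=M_{\ab}(C)^\perp$, which is the displayed equality in the corollary. Here I would recall, as the text already observes, that $\varphi_{H_n}=h_{n,L}$, so that the orthogonal $C^\perp$ appearing here is precisely $C^{\perp_n}$, and likewise $\psi_{H_n}$-orthogonality on $\M_{m\times n}(F)$ is the notion defining Lagrangian codes there.

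Finally, for the ``in particular'' clause, $C$ is a Lagrangian code means $C=C^{\perp_n}$; applying $M_{\ab}(-)$ and using the equality just established gives $M_{\ab}(C)=M_{\ab}(C^{\perp_n})=M_{\ab}(C)^\perp$, i.e. $M_{\ab}(C)$ is a Lagrangian code in $\M_{m\times n}(F)$. Conversely, if $M_{\ab}(C)=M_{\ab}(C)^\perp=M_{\ab}(C^{\perp_n})$, then injectivity of $\cb\mapsto M_{\ab}(\cb)$ (Lemma~\ref{mc}) forces $C=C^{\perp_n}$. I expect no genuine obstacle: the entire content is packaged in Lemma~\ref{transfer} together with the existence, in characteristic two and the separable case, of a trace-orthonormal basis, which is imported from \cite{Be}; the only point to be careful about is checking that $H_n$ is indeed symmetric and invertible over $F$ so that Lemma~\ref{transfer} applies verbatim, and that the two ambient forms are correctly identified as $\varphi_{H_n}$ and $\psi_{H_n}$.
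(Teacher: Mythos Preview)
Your proposal is correct and matches the paper's own argument: the corollary is obtained by applying Lemma~\ref{transfer} with $s={\rm Tr}_{L/F}$ and $B=H_n$, using \cite[Lemma 3.2]{Be} to guarantee the existence of a trace-orthonormal basis, which is self-dual so that $\ab'=\ab$. Your added details (checking $H_n$ is symmetric and invertible, and spelling out both directions of the ``in particular'' via the injectivity of $M_{\ab}$) are fine and do not depart from the paper's approach.
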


Note that Proposition \ref{proplag2} and the corollary above yield the existence of Lagrangian MRD codes $\mathscr{C}\subset \M_n(\mathbb{F}_q)$ when $q$ is even and $n\equiv 2 \ [4]$.

\section{Conclusion}   
     
In this paper, we have studied the question of self-dual MRD codes à la Gabidulin, and proved necessary conditions on the field extension  to have the existence of such codes, as well as sufficient conditions on the base field to have their non-existence. Along the way, we proved the non-existence of MRD codes which are self-dual for the standard hyperbolic form.
We also completely solved the question of the existence of self-dual MRD codes over finite fields when $m=n$. We also compared the two notions of duality and put our results in perspective with the existing results for Delsarte self-dual MRD codes. We have shown strong evidence for the fact that no self-dual MRD code in $\M_n(\mathbb{F}_q)$ should exist if $-1$ is a square in $\mathbb{F}_q$ or $n\equiv 0 \ [4].$
Finally, we proved that there is no MRD self-dual MRD codes in characteristic two for the standard inner product on $L^n$, and we proposed a notion of duality on $L^n$ and on $\M_{m\times n}(F)$ for which self-dual MRD codes may exist.

{\bf Acknowledgments. }The author thanks warmly the anonymous referees for their insightful comments and suggestions, which greatly improved the exposition of this paper and drastically simplified the arguments in the proof of Theorem \ref{MRDhyp}.

\end{document}